\if@twocolumn\PassOptionsToPackage{switch}{lineno}\else\fi\makeatother
\let\save@ps@pprintTitle\ps@pprintTitle
\def\ps@pprintTitle{\save@ps@pprintTitle\gdef\@oddfoot{\footnotesize\itshape \null\hfill\today}}
\def\hlinewd#1{%
  \noalign{\ifnum0=`}\fi\hrule \@height #1%
  \futurelet\reserved@a\@xhline}
  \renewenvironment{abstract}{\global\setbox\absbox=\vbox\bgroup
    \hsize=\textwidth%
  \noindent\unskip\textbf{}
   \par\medskip\noindent\unskip\ignorespaces}
   {\egroup}
\else\usepackage{stmaryrd}\fi
\def\mcWidth#1{\csname TY@F#1\endcsname+\tabcolsep}
\def\cAlignHack{\rightskip\@flushglue\leftskip\@flushglue\parindent\z@\parfillskip\z@skip}
\def\rAlignHack{\rightskip\z@skip\leftskip\@flushglue \parindent\z@\parfillskip\z@skip}
\if@twocolumn\@ifpackageloaded{stfloats}{}{\usepackage{dblfloatfix}}\fi\fi
\def\eqalign#1{\null\vcenter{\def\\{\cr}\openup\jot\m@th
  \ialign{\strut$\displaystyle{##}$\hfil&$\displaystyle{{}##}$\hfil
      \crcr#1\crcr}}\,}
\renewcommand\efloat@iwrite[1]{\immediate\expandafter\protected@write\csname efloat@post#1\endcsname{}}}{\newif\ifefloat@tables}%
\def\BreakURLText#1{\@tfor\brk@tempa:=#1\do{\brk@tempa\hskip0pt}}
\let\lt=<
\let\gt=>
\def\processVert{\ifmmode|\else\textbar\fi}
\def\subparagraph{\@startsection{paragraph}{5}{2\parindent}{0ex plus 0.1ex minus 0.1ex}%
{0ex}{\normalfont\small\itshape}}%
\newcommand\role[1]{\unskip}
\newcommand\aucollab[1]{\unskip}
\def\checkGraphicsWidth{\ifdim\Gin@nat@width>\linewidth
	\tsGraphicsScaleX\linewidth\else\Gin@nat@width\fi}
\def\checkGraphicsHeight{\ifdim\Gin@nat@height>.9\textheight
	\tsGraphicsScaleY\textheight\else\Gin@nat@height\fi}
\def\fixFloatSize#1{}
\let\ts@includegraphics\includegraphics
\def\inlinegraphic[#1]#2{{\edef\@tempa{#1}\edef\baseline@shift{\ifx\@tempa\@empty0\else#1\fi}\edef\tempZ{\the\numexpr(\numexpr(\baseline@shift*\f@size/100))}\protect\raisebox{\tempZ pt}{\ts@includegraphics{#2}}}}
\DeclareMathAlphabet{\mathpzc}{OT1}{pzc}{m}{it}
\def\URL#1#2{\@ifundefined{href}{#2}{\href{#1}{#2}}}
\def\UrlOrds{\do\*\do\-\do\~\do\'\do\"\do\-}%
\g@addto@macro{\UrlBreaks}{\UrlOrds}
\edef\fntEncoding{\f@encoding}
\newif\ifmultipleabstract\multipleabstractfalse%
\newtheoremstyle{break}
{\topsep}{\topsep}%
{\itshape}{}%
{\bfseries}{}%
{\newline}{}%
\theoremstyle{break}
\newtheorem{theorem}{Theorem}
\newtheorem{proposition}[]{Proposition}
\newtheorem{lemma}[theorem]{Remark}
\newtheorem{definition}[]{Definition}
\newtheorem{assumption}[]{A.}
\DeclareFontFamily{OT1}{pzc}{}
\DeclareFontShape{OT1}{pzc}{m}{}{<-> s * [1.10] pzcmi7t}{}
\DeclareMathAlphabet{\mathpzc}{OT1}{pzc}{m}{it}
\journal{Physica A}
\begin{document}

	\begin{frontmatter}
		
		
		
		\title{A visit generation process for human mobility random graphs with location-specific latent-variables: from land use to travel demand}
		
		
		\author[inst1,inst2]{Fabio Vanni}
		
		\affiliation[inst1]{organization={Department of Economics, University of Insubria},
			addressline={Via Monte Generoso 71}, 
			city={Varese},
			postcode={21100}, 
			country={Italy}}
		
		
		\affiliation[inst2]{organization={Center for nonlinear Science, Department of Physics, University of North Texas},
			addressline={Union Circle}, 
			city={Denton},
			postcode={1155}, 
			state={Texas},
			country={USA}}
		
		
		\begin{abstract}
This research introduces a  mathematical framework to comprehending human mobility patterns, integrating mathematical modeling and economic analysis. The study focuses on latent-variable networks, investigating the dynamics of human mobility using stochastic models. By examining actual origin-destination data, the research reveals scaling relations and uncovers the economic implications of mobility patterns, such as the income elasticity of travel demand. The mathematical analysis commences with the development of a stochastic model based on inhomogeneous random graphs to construct a visitation model with multipurpose drivers for travel demand. A directed multigraph with weighted edges is considered, incorporating trip costs and labels to represent factors like distance traveled and travel time. 	The study gains insights into the structural properties and dynamic correlations of human mobility networks, to derive analytical and computational solutions for key network metrics, including scale-free behavior of the strength and degree distribution, together with the estimation of assortativity and clustering coefficient. Additionally, the model's validity is assessed through a real-world case study of the New York metropolitan area. The analysis of this data exposes clear scaling relations in commuting patterns, confirming theoretical predictions and validating the efficacy of the mathematical model. The model further explains a series of scaling behaviors in origin-destination flows among areas of a region, successfully reproducing statistical regularities observed in real-world cases using extensive human mobility datasets. In particular, the model's application to estimating income elasticity of travel demand bears significant implications for urban and transport economics. 
		\end{abstract}
		
		
		
		\begin{keyword}
			complex mobility networks \sep inhomogeneous random graph \sep stochastic visitation process \sep origin-destination transport flows \sep income elasticity of travel demand
		\end{keyword}
		
	\end{frontmatter}
	
	
	\section{Introduction} \label{sec:intro}
	How people move from one place to another represents a crucial key to depicting human relations and social interactions in complex societies. Data-driven structured studies on human mobility are valuable to identify the factors that drive the movement of individuals and goods, pointing out how they affects economic outcomes in field as labor market dynamics, economic growth, transportation planning and consumer behavior \cite{bettencourt2013origins,bettencourt2021introduction}. 
	Human mobility encompasses a wide range of spatial and temporal scales, from daily commuting within a city to long-term migration across countries and its dynamics  are influenced by factors like economic development, technological advancements, political stability and natural features of the territory. Human mobility and transport dynamics has been studied through different mathematical approaches ranging from individual path-based models to collective population based ones  starting from single individual's behavior up to examining collective mobility trends \cite{barbosa2018human,solmaz2019survey,arcaute2022recent}.  
	The study presented in this work aims to provide a stochastic model for complex network of human mobility based on an origin-destination structure which represents the backbone of  mobility  visitation flows with individuals moving from origin locations to destinations. Then, inferring latent-variables in a specific case study, the graph statistical patterns observed in origin-destination network flow can be recovered to have the same expected properties computed trough the latent-variable estimations. 
	Specifically, the model is grounded in the class of inhomogeneous random graphs \cite{van2016random,bollobas2007phase} also known  as latent-variable networks \cite{kim2018review,rastelli2016properties,hartle2021dynamic,balogh2019generalised,caldarelli2002scale}. Moreover, the mobility network is a directed multigraph with weighted edges, since each travel is labeled with weights (or trip jumps) that represent trip costs such as distance traveled, travel time or emissions per trip. The use of a graph representation of human mobility helps to investigate the dynamic behavior of the system starting from the properties of single constituents, and to estimate how much one part of the network influences another by using graph metrics like degree distribution, centrality, assortativity and transitivity of trip mobility network.  
	Following the approach of dynamical formation of latent-variable graphs \cite{vanni2021incremental,boguna2003class,hoppe2013mutual}, the network  has been, then, formalized by using a master equation for the probability density function which encodes  the distribution of the number of visits in destination locations weighted by the costs of trips to get there. In particular, the evolution of visit distribution over time is described trough an integro-differential equation with an explicit asymptotic	solution which highlights the relation of visiting generation patterns to intrinsic and environmental features of locations as specified by latent-variable properties. In stochastic theory, the analytical interpretation achieved is related to the literature of continuous-time Markov processes with the use of a Kolmogorov-Feller partial differential equation  \cite{feller1991introduction,denisov2019exact}. The real-valued solution of the visit ditribution of the trip mobility network has been as compound distribution of traveling packet of normal like conditional probabilities for independent visitation process for each type of destinations. Such visit generation process of the temporal random graph can be is then proofed to be equivalently formalized in terms of a mixture of compound Poisson counting processes commonly used in financial and actuarial science literature \cite{jang2021review,privault2013stochastic}, for example in models of option pricing, risk and insurance analysis, high-frequency trading and market microstructure. 
     In the case of interest, the process is driven by hidden graph structures represented by latent variables associated to each location,  that will be identified with an attractiveness attribute and a productiveness one. The first variable encodes an intrinsic capability of a location to attract visits due to some specific properties of the destination, as extensively discussed in the paper. The second variable encodes, complementary, the ability to produce new travelers, and it depends on some intrinsic feature of the origin area. Moreover the way how each area produces or attracts new trips is achieved according to arrival and departure rates that will be express by some function of the latent variables.  In summary, the mobility graph process is fully characterized by the latent-variable statistical characteristics, and different configurations of the latent-variable patterns  will determine the  mobility network  scaling properties with a particular attention on scale-free behavior of the degree distribution and the spectral trend of graph correlations.    
    In a economic modeling framework, those latent variables allows the characterization of travel demand for different areas in the region, and they can be inferred through the analysis of important demographic, geographic and economic indicators.  As discussed in literature \cite{van2007effects,greenwald2006relationship,lee2015relating},  many different factors are involved for a place to be considered as attractive, such as trip purposes, job or leisure opportunities, infrastructure facilities, geographical characteristics, urban zoning planing. Generally, all those physical and human factors can be captured by land use and travel behavior analysis  which are at the basis of two primary approaches in transportation economics and engineering literature i.e. trip-based models and activity-based models. 
     In such literature, in fact, travel is considered to be a derived demand, that it is generated in response to people satisfying personal needs and desires \cite{geurs2004accessibility,acheampong2015land}. Moreover, the costs and efforts to realize the necessary trips are then influenced and determined by geographical, social and economic factors such as the distance from means of transport, house affordability, economic status, social equality,  the proximity to nature and many others.    
	The research direction is then focused on the use of a real mobility network in order to validate the model presented. In this work, New York metropolitan area has been used as case study by using detailed origin-destination tables where many important information such as the number of people that travel between different locations, along with the information on costs of the visits as well as demographic and economic  properties of the travelers. The analysis of such data  reveals clear scaling relations in commuting patterns of movement flow in agreement with empirical studies and consistent with theoretical predictions as in  \cite{song2010modelling,Fabioetal2013,universalschlapfer2021,nie2021understanding}.
	
	The first contribution of the paper is to provide a theoretical network framework able to describe  human mobility flow based on the existence of latent variables which characterize the travel demand features in urban and transport dynamics. Specifically, analytical, numerical and computational solutions are provided for the strength distribution, assortativity and clustering coefficient. The second contribution consists on the analysis of real origin destination network, so revealing  the existence of scale-free behaviors in the frequency of visits of locations and scaling relations between visits and  trip costs. Consequently, a more pragmatic original contribution of the study is to determine what are the latent variable statistical features able to reproduce the observed patterns of the trip mobility network  in terms of distribution and correlations. 
	The study emphasizes a reciprocal interplay between human mobility and urban structure, whi is an important topic widely discussed in the literature \cite{le2012urban,wang2020review,bassolas2019hierarchical,greenwald2006relationship,lee2015relating,akiba2022correlation}.	Finally, as economical application, I will show the relation between the attractiveness scaling exponents and the value of income elasticity passing through the information about allocation and utilization of land resources for various economic and social activities.  The estimated income elasticity of demand is often used to predict future changes in consumption in response to changes in income. \cite{fouquet2012trends,le2014does,litman2021understanding,ghoddusi2021income}. 

The paper is organized in the following way:  the section \ref{sec_model} is devoted to the definition and mathematical modeling of the latent variable mobility network and, then,  analytical estimates of  scaling laws and graph correlations of visitation patterns. The section \ref{sec_results} will be devoted to network measurements in the case study of New York metropolitan area with the analysis of origin destination tables  obtained from Safegraph dataset \cite{kang2020multiscale}. Visit distribution and network correlations will be measured. In Section \ref{sec_empirical}, the network model is implemented and validated on the basis of latent variable specification. Finally, the latent variable formalism will be applied to an economic problem of the estimation of income elasticity of travel demand.
An appendix section and supplementary material are provided to enhance the discussion on the stochastic intepretation of the visit process, further statistical analysis of the data, and a detailed description of data and interpretation of  the latent variables in travel demand modeling.
\begin{figure}[!ht]
		\centering
		\begin{subfigure}[l]{0.475\textwidth}
			\centering
			\includegraphics[width=0.75\linewidth]{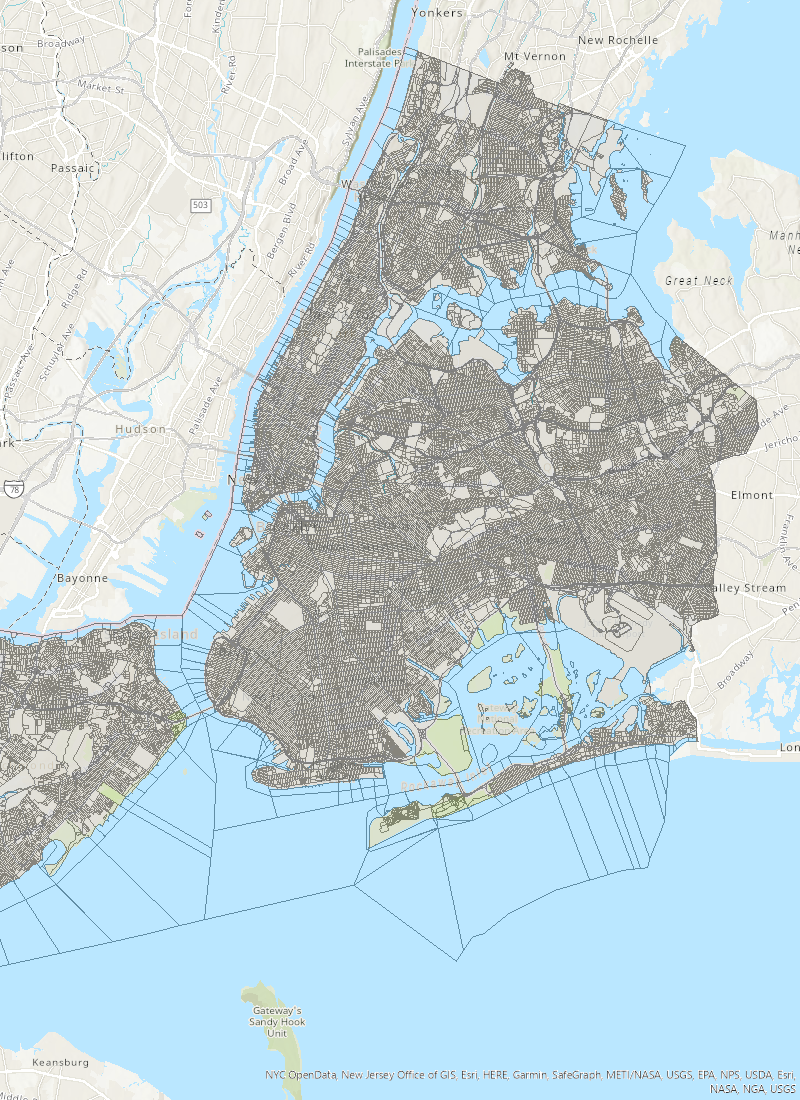}
			\caption{}
		\end{subfigure}
		\centering
		\begin{subfigure}[l]{0.475\textwidth}
			\centering
			\includegraphics[width=0.95\linewidth]{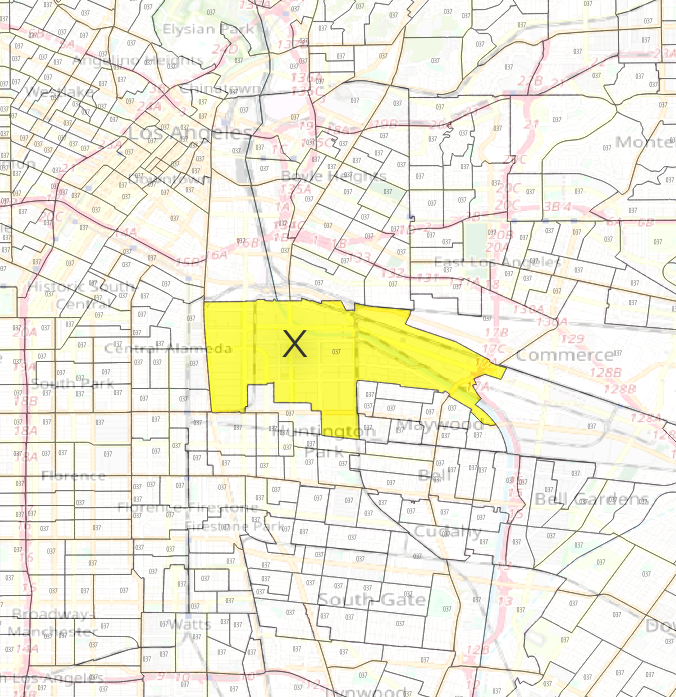}
			\caption{}
		\end{subfigure}
		\caption{Example of a tessellation of a large urban area (i.e. New York City) in  (Census) blocks (a) as in \cite{Tiger}. In particular, a block is characterized by an intrinsic attractiveness $x$ as a latent variable depending on many features of location (b). }\label{fig_tesselNY}
	\end{figure}
	\section{Model} \label{sec_model}	
A trip mobility network can be build upon an origin-destination rationale and it will be represented as a directed graph where the nodes are administrative units of an urban or regional area (a city, a county, a state etc...) and the directed edges represent number of visits from an origin to a destination area.  
Let us observe, that one latent-variable is the attractiveness of an area and it can be thought as an ancestral property of the block which captures ideally all the variables that can determine it, such as the number residential apartments, job positions, retail stores, geographical features, geopolitical conditions,  social and economical factors, transportation systems and facilities including also the average distance of a block with attractiveness from the rest of the urban area (i.e. the physical average accessibility). 
Essentially those latent-variables, from an economic perspective, can be seen as drivers of travel demand and trip behavior in general.		
		Each node is characterized by an intrinsic attractiveness $x$ representing the travel demand for a destination area. Such attribute is related to various properties of the location such as job opportunities, number of retail stores, geographic features, infrastructure, facilities, school districts etc.
		Similarly, each node is also characterized by another property $y$ that characterizes the location as an origin of trips. That attribute captures the possible users which can depart from the area, so capturing the potential of the are to produce trips, see the Supplementary Materials for a detailed discussion.
The model presented describes the occurrence dynamics of trips that occur between origins with a given population $y$  and destination blocks with a given attractiveness label $x$. In this paper, I will, primarily, focus on the case of exchangeable origin-destination blocks, so that the trip generating process can be represented by two independent process: the trip production and the trip attraction process. The first accounts for the number of trips (departures) originating from a block and the latter accounts for the number of trips (visits) ending in each destination block. 
	
First, the visitation model is defined as a time-varying network made up of three ingredients: 1) the location structure of the graph in terms of latent variables, 2) the visit arrival process to destinations, and 3) the effect of trips as sizes assigned to each visit. 
Second, properties of dynamic mobility network will be presented in terms of the its strength degree distribution and degree-correlations, where  the strength of a location is intended as number of visits weighted by their correspondent trip size.	
	\subsection{Definition}
A geographical region of interest, $R$, is specified as the portion of territory for which we are interested in generating the flows. Over the region of interest, a set of geographical tiles called tessellation, $\mathcal{T}$, made up of locations  $l_i$ so that $ {T}= \{ l_i: i = 1 \ldots n \}$  so that the locations are non-overlapping, $l_i \cap l_j = \emptyset, \forall i \neq j$, and  the union of all locations completely covers the region of interest, $\cup_{i=1}^{n}l_i=R$. The tessellation for real geographical regions can be obtained in many ways according to the scope. In the case of interest, location tiles are the census areas defined by national authorities for  administrative and demographic purposes. In particular e census blocks are the smallest geographic unit used by the United States Census Bureau for tabulation of data collected from all houses. An example of tessellation is given in Fig.\ref{fig_tesselNY} for Ney York city, where tiles are the census block division of the urban area.
We denote a graph by $\mathcal{G} = (T, E)$, where $T$ is the set of $n$ locations (nodes) and $E$ is the set of trips (edges). The graph is directed (trip direction) and it allows self-edges and multiple edges (many travelers from and to the same origin and destination).  Furthermore, the network is a labeled graph since the nodes will be specified by intrinsic location attributes.
Finally, the trip mobility network will be defined on top of three fundamental assumptions which will be explained in the next paragraph. The first assumption defines the latent variable backbone of the mobility graph.
	\begin{assumption}[Latent variable graph]\label{a_1}
		Each location $\ell_i$ is labeled with a node type by the latent variable $x_i$, which represents the  attractiveness of destination where a trip ends.  It will be interpreted  as the driver for the travel demand to each destination.
		Let $(x_i)_{i\in[n]}$ be a sequence of latent variables with values in a node-type space $\Omega_x \subseteq \mathbb{R}$ such that the empirical distribution of $(x_i)_{i\in[n]}$ approximates a probability measure $\mu_x$ as $n \to \infty$ \cite{bollobas2007phase,van2016random}.
		Consequently, the set of attractiveness variables $\{x_1, \ldots, x_n \}$ assocated to a location are considered realizations of independent and identically distributed latent random variables with an empirical distribution that converges almost surely to the cumulative distribution function $F(x)$.
	\end{assumption}
Let us observe that, in the paper, the latent-variable probability measure is defined on $(\mathbb{R},\mathcal{B}(\mathbb{R}))$ {where $\mathcal{B}(\mathbb{R})$ is the Borel $\sigma$-algebra generated from the real line} and the probability  will be assumed absolutely-continuous respect to the Lebesgue measure, so that the probability density function can be defined as $\rho(x)=F'(x)$. Such condition will allow for various mathematical and statistical manipulations in the paper for practical applications.
Similarly, the locations, other than being destinations, are, at the same time, labeled as origins. The correspondent  intrinsic feature variable will represent the resident active population which stays in the area from which a trip originates\footnote{Despite of not being a proper hidden variable, the people which can actively considered travelers undergoes to many factors (such as residence, age, employment status, and many others).}. The latent variable in this case is named as productiveness of the location. Consequently, the set of productiveness variables $\{y_1, \ldots, y_n \}$   in node-type space $\Omega_y\subseteq \mathbb{R}$ consists of realizations of independent and identically distributed latent random variables with probabilty measure $\mu_y$ where a probability density function is defined $\phi(y)$, which can be different from the attractiveness distribution. 
From a dynamical perspective, the model is presented as random graph process that is a stochastic process that describes a random graph evolving in time \cite{janson2011random,bollobas1998random}.
  The processes studied here will have a fixed vertex set, and they will 	start without any edges and grow by adding edges according 	to linking rule, without deleting any, since the total number of visits up to a certain time is 
 studied. Once a regional tessellation has been defined in terms of a graph as in A.\ref{a_1}, the graph processes for directed multiedges can be defined \cite{ben2020semi} as random graph $\mathcal{G}_t$ evolving so that $\forall t$ a new edge is added  $\mathcal{G}_{t+1}=\mathcal{G}_t\cup \ell $ where $\ell \in 2\binom{V}{2}\setminus E{(\mathcal{G}_t)}$ is chosen randomly with replacement with probability proportional to a kernel function $\mathcal{K}(x,y)$ defined as $\Omega_x\times\Omega_y\to [0,\infty)$. Such kernel is rules out the chances that a location of attractiveness $x$ will be a destination of a visit whose trip originated in a location of feature $y$.  Knowing that  $\mu_y(dy)$ represents the measure assigned to the infinitesimal interval of productiveness for origin locations. 
	\begin{assumption}[Trip generation]\label{a_2} 
	 The trip arrival process is defined through the infinitesimal arrival intensity  $d\mathcal{V}_x=\mathcal{K}(x,y)\mu_y(dy)$ which drives the dynamics of new travels landing in destinations of attractiveness $x$ conditional to trips that departed from locations of infinitesimal productiveness $y$. 
	\end{assumption}
Such rate can be interpreted as the propensity (or intensity) for a traveler to move towards a destination of a given attractiveness $x$.
So, the graph evolves  when a new visit from an origin to a destination is completed (a new link in the network) according to the attraction and the production rates in the degree-space of the mobility graph. Similarly, as dual problem, a trip departure process can be defined by the infinitesimal departure intensity $d\mathcal{V}_y=\mathcal{K}(x,y)\mu_y(dy)$.

		The evolving mobility graph $\{\mathcal{G}^{(t)}\}$ is fully described by means of a time-varying adjacency matrix $A^{(t)}$ which represents the origin-destination table of the mobility problem at time $t$ and the sum along the columns represent the in-degree of the nodes or equivalently the number $k$ of visits  received up to time $t$. 
Finally, an extension of the model is considered where typical lengths of single trips is associated to the corresponded visits. In the case of weighted network where each links is associated to a weight  as a random variable independent from visitation process and the graph is expressed in terms of a weighted directed multigraph adjacency matrix $\tilde{A}^{(t)}=C^{(t)}\circ A^{(t)} $, which indicates a element-wise matrix multiplication and $C$ is considered either a coefficient matrix or a random one according to the real-world phenomena under study alongside the data that can be used.
Specifically:
	\begin{assumption}[Trip weights]\label{a_3}
 All the arrivals of new trips are weighted in terms of some trip features. The weights $\mathcal{r}$  are independent and identically distributed  random variables, with a specified probability density function $\varrho_x(\mathcal{r})$. 
\end{assumption}
 Those weights have the meaning of trip 'size' as for example the distance traveled from an origin to the selected destination, or the emission impact of each trip, or any type of travel cost or visit benefit. The weight probability $\varrho_x(\mathcal{r})$ may depend or may not on the attractiveness of the destination node  according to the particular type of weight considered in the specific situation under investigation.
 Let us notice that  the in-strength (i.e. weighted in-degree) of a destination node $\kappa$ is defined as the sum of  weights of all the visits received. In the case that weights are all equal to $1$ the strength $\kappa$ is equivalent to the degree $k$.
At this point, finally, the model can be defined:	
    \begin{definition}\label{definition_net}
    	    	The trip mobility network  is a temporal inhomogeneous random graph model that describes a visit generation process satisfying the assumptions A.\ref{a_1}, A.\ref{a_2}, A.\ref{a_3}.
    \end{definition}

%
The weighted mobility multigraph process $\mathcal{G}_t$  can be studied in terms of the adjacency matrix $\tilde{A}^{(t)}$ representation of the graph.
A general framework which  describes ensembles of dynamic networks can be assessed by making a Markov assumption on the evolution of the network by studying the probability of realization of a member of configuration ensemble  graph over time \cite{hanneke2010discrete,zhang2017random}. Specifically, the temporal evolution of finding the trip mobility network in the configuration $\mathcal{A}$ at time $t$ after $L$ steps (trips)  can be written as:
$$\mathbb{P}(\mathcal{A},t) =\prod_{l=0}^{L}\mathbb{P}(\tilde{A}^{(t_l)} | \tilde{A}^{(t_{l-1})},\boldsymbol{{\varLambda}})  $$
where $\boldsymbol{\varLambda} $ is any transition rule built on latent variables, which describes the creation, at each time $t_l$, of a new trip from an origin towards a destination.	
From a modeling perspective,  the complexity of knowing the configurational distribution of the origin-destination mobility network can be reduced trough the study of the degree (or strength) distribution and its higher order statistics.	We actually model the degree distribution, that in the in-degree case results to be the visiting distribution defined as:
$$P(\kappa,t)=\sum_{\{\mathcal{A}\}}\sum_{i}\delta(\kappa-\kappa_i)	\mathbb{P}(\mathcal{A},t)$$
where $\delta$ is the delta function, and $\mathbb{P}(\mathcal{A},t)$ is the probability to find our trip mobility network in the configuration $\mathcal{A}$ at time $t$, where each node has a strength degree $\kappa_i$. If the in-degree of a node is simply the number of arrivals of new travelers, the in-strength of a location node is defined as the number of arrivals of  new travelers who have faced a cost. Such weighted version of arrivals is named visits where each trip has an intensity.

	\subsection{Visit distribution}
In the case of this work, the derivation of the  visit distribution is assessed on the basis of the latent variable framework.
	The model is developed upon the asymptotic regime assumption of an infinite network where the number of locations in a tessellation is extremely large, where each location can generate and attract an unlimited number of edges (trips). Consequently, the continuous mean-field approach is used so that single locations can be studied as uncorrelated nodes in the same class of locations with the same attractiveness  \cite{boguna2003class,hoppe2013mutual,barrat2008dynamical}. So rather than studying the single location $l_i$, one analyzes those locations which belong to the same class of attractiveness level, i.e.  $l_x$ seen as a continuous random variable with probability density $\rho(x)$.
 Let us call conditional  visit distribution  $p(\kappa,t|x)$ the  strength distribution conditional to destinations of the attractiveness type $x$, and  each class  evolves independently one from any another, so that a superposition of co-evolving conditional degree distributions is possible. 
The overall visit distribution can be derived according to the following proposition:
	\begin{proposition}[]\label{prop1}
		According the visitation model's assumptions as in the Definition \ref{definition_net}, the visit distribution of the mobility network is fully characterized by  the attraction rate  
		$\nu_x$ that defines the transition probability, per unit of time, that a destination of attractiveness $x$ increases its number of visits by one from any destination. The attraction rate is defined as the mean intensity of the trip arrival process $\nu_x=\int_{\Omega_y}d\mathcal{V}_x$.
\begin{itemize}
	\item The evolution of the conditional visit distribution can be described by a master equation for destinations with attractiveness $x$ as an integro-differential Kolmogorov-Feller equation:
	\begin{align}\label{eq_mastereq}
		\frac{\partial }{\partial t} p_x (\kappa,t) =&\;\int_{0}^{\kappa} \nu_x \varrho_x(\mathcal{r})\, [ p_x(\kappa - \mathcal{r},t)-p_x(\kappa,t)]d\mathcal{r} 
	\end{align}
	with the initial condition  $p_x(\kappa, 0) = \delta(\kappa)$.
	  In the asymptotic regime,  the conditional probability $p(\kappa,t|x)$  can be written as:
\begin{align}\label{eq_truncatednormal}
p(\kappa,t|x)\sim&\; \epsilon_t\tfrac{1}{\sqrt{2\pi \nu_x \langle \mathcal{r}^2\rangle_x t}}e^{-\frac{(\kappa-\nu_x  \langle \mathcal{r} \rangle_x t)^2}{2\nu_x\langle \mathcal{r}^2\rangle_x t}} \qquad \text{, where } \; \epsilon_t=\tfrac{2}{1+\text{erf}\left[\frac{\nu_x  \langle \mathcal{r} \rangle_x t }{\sqrt{2\nu_x  \langle \mathcal{r}^2 \rangle _x t}}\right]} 
\end{align}
where the correction factor $\epsilon_x \to 1$ in the asymptotic limit $t\to \infty$, and 	$\langle \mathcal{r} \rangle _x$ and $\langle \mathcal{r}^2 \rangle _x$ are the first and the second moment of the trip-weight distribution $\varrho_x(\mathcal{r})$.
		
			\item Finally, the temporal asymptotic expression of the visit distribution is a mixture distribution as: 
			\begin{equation}\label{eq_visitingdistr}
			P(\kappa,t)=\int_{\Omega_x} p(\kappa,t|x)\rho(x)dx \sim \; \sum_{i=1}^{m} \left | \tfrac{\partial z(x)}{\partial x} \right |_{x_{0,i}}^{-1}  {\rho\big(x_{0,i}\big)} 
			\end{equation}
			where $x_{0,i}=x_{0,i}(\kappa,t)$ are the zeros of the expression $z(x)=\kappa-\nu_x \langle \mathcal{r} \rangle_x\,t$, and $P(\kappa,t)$ the in-strength distribution of the overall trip mobility network. 
		\end{itemize}
		
	\end{proposition}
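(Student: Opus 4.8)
The plan is to recognize the conditional in-strength of the destinations sharing a common attractiveness class as a \emph{compound Poisson process}, solve the corresponding forward (Kolmogorov--Feller) equation by a transform, and then pass to the continuum mixture over classes. For the master equation I would argue directly from Assumptions A.\ref{a_2} and A.\ref{a_3}: visits to a class-$x$ destination form a Poisson stream of rate $\nu_x=\int_{\Omega_y}d\mathcal{V}_x$, and each arrival adds an independent weight $\mathcal{r}\sim\varrho_x$ to the running strength $\kappa$. Writing the balance of $p_x(\kappa,t+dt)$ over the two mutually exclusive events ``no arrival in $[t,t+dt)$'' (probability $1-\nu_x\,dt$) and ``one arrival of size $\mathcal{r}$'' (probability $\nu_x\varrho_x(\mathcal{r})\,dt\,d\mathcal{r}$), subtracting $p_x(\kappa,t)$, dividing by $dt$ and letting $dt\to0$ yields the gain--loss form of \eqref{eq_mastereq}, using $\int_0^\kappa\varrho_x(\mathcal{r})\,d\mathcal{r}=1$ on the support of the positive weights; the initial condition holds because $\kappa=0$ almost surely at $t=0$.

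For the conditional law I would pass to the characteristic function $\widehat p_x(\omega,t)=\int e^{i\omega\kappa}p_x(\kappa,t)\,d\kappa$, under which the convolution in \eqref{eq_mastereq} factorizes: $\partial_t\widehat p_x=\nu_x(\widehat\varrho_x(\omega)-1)\widehat p_x$ with $\widehat p_x(\omega,0)=1$, hence $\widehat p_x(\omega,t)=\exp\{\nu_x t(\widehat\varrho_x(\omega)-1)\}$, the characteristic function of a compound Poisson distribution. Expanding $\widehat\varrho_x(\omega)=1+i\omega\langle\mathcal{r}\rangle_x-\tfrac12\omega^2\langle\mathcal{r}^2\rangle_x+o(\omega^2)$ (this is where a finite second weight moment is needed) gives, for $\nu_x t$ large, the Gaussian characteristic function with mean $\nu_x\langle\mathcal{r}\rangle_x t$ and variance $\nu_x\langle\mathcal{r}^2\rangle_x t$ --- the central limit theorem for the compound Poisson process. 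Fourier inversion produces a Gaussian density; since $\kappa\ge0$ one restricts it to the positive half-line and renormalizes, the normalizing constant being the reciprocal of $\tfrac12[1+\mathrm{erf}(\cdot)]$, i.e. exactly the factor $\epsilon_t$ of \eqref{eq_truncatednormal}, and $\epsilon_t\to1$ because the erf argument grows like $\sqrt{t}$.

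For the overall distribution I would insert \eqref{eq_truncatednormal} into $P(\kappa,t)=\int_{\Omega_x}p(\kappa,t|x)\rho(x)\,dx$ and evaluate by a Laplace/saddle-point argument in the regime where $\kappa$ and $t$ are jointly large with $\kappa/t$ fixed. Since the conditional mean grows linearly in $t$ while its standard deviation grows only as $\sqrt{t}$, the Gaussian factor, read as a function of $x$, concentrates on the roots of $z(x)=\kappa-\nu_x\langle\mathcal{r}\rangle_x t$; equivalently $\kappa(t)/t\to\nu_x\langle\mathcal{r}\rangle_x$ (law of large numbers for the compound Poisson process), so $p(\kappa,t|x)$ acts like $\delta(z(x))$ as a distribution in $x$. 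Applying the composition rule $\delta(z(x))=\sum_i|z'(x_{0,i})|^{-1}\delta(x-x_{0,i})$ over the zeros $x_{0,i}$ and integrating against $\rho$ gives \eqref{eq_visitingdistr}; concretely the same comes out of a local quadratic expansion of $z$ about each root, the resulting Gaussian integral over $x$, and the cancellation of the $\sqrt{t}$ prefactors that leaves $|\partial_x z|^{-1}_{x_{0,i}}\rho(x_{0,i})$.

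The delicate points, where the proof's care belongs, are two. First, the central-limit step: controlling the error of the second-order Taylor truncation of $\widehat\varrho_x$ uniformly enough (a Berry--Esseen-type estimate, needing finite second and ideally third weight moments) that transform inversion lands on a genuine density rather than a merely formal one. Second, assigning ``$\sim$'' in \eqref{eq_visitingdistr} a precise meaning: it is an asymptotic valid for $\kappa\asymp t$ with $\kappa/t$ bounded away from the critical values of $x\mapsto\nu_x\langle\mathcal{r}\rangle_x$, so that the roots $x_{0,i}$ are finite in number, non-degenerate ($\partial_x z\neq0$), and interior to $\Omega_x$, and it holds only up to the lower-order erf correction and $O(t^{-1/2})$ fluctuation corrections around each root. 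I would state these regularity provisos explicitly rather than attempt full rigor, in keeping with the mean-field ($n\to\infty$, $t\to\infty$) setting already adopted in the paper.
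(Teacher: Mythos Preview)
Your proposal is correct and follows essentially the same route as the paper: derive the Kolmogorov--Feller equation from the arrival/weight assumptions, solve it by a transform to get the compound-Poisson characteristic exponent, truncate at second order to obtain the (truncated) Gaussian, and then reduce the mixture integral via the $\delta$-composition rule at the zeros of $z(x)=\kappa-\nu_x\langle\mathcal{r}\rangle_x t$. The only cosmetic differences are that the paper works with the Laplace transform in $\kappa$ rather than the characteristic function, and it reaches the $\delta$-approximation by writing $p(\kappa,t|x)\sim\delta(\kappa-\mathbb{E}_t[\kappa|x])$ directly (``Dirac mixture'') rather than framing it as a saddle-point/LLN concentration; your explicit CLT/LLN reading and the regularity caveats you list are a welcome sharpening of what the paper leaves implicit.
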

	\begin{proof}
		Let us notice that the attraction rate $\nu_x $ is 	transition rate of new visits per unit of time interval, as the chance of having a new arrival in a destination of type $x$ originated from any location, so that the attraction rate is the mean intensity  as in \cite{van2016random,bollobas2007phase,hoppe2013mutual,vanni2021incremental}:
		\begin{align}
		\nu_x=\int_{\Omega_y}d\mathcal{V}_x={\int_{\Omega_y} \mathcal{K}(x,y)d\mu_y(y)}=\int_{\Omega_y}  \mathcal{K}(x,y) \phi(y)dy
		\end{align}
	The master equation for the evolution of the conditional probabilities $p_x(\kappa,n)$ for locations with attractiveness $x$, where the step size is the correspondent  $\Delta \kappa= \mathcal{r}$ which represents the weight of each link i.e. the decision heuristic variable to move in the selected destination. It can be written as:
	\begin{align*}
		p (\kappa,t+\tau|x)=&\; (1 -\nu_x)p(\kappa,t|x)+   \int_{0}^{\kappa} \, \nu_x  p(\kappa-\mathcal{r} ,t|x) \varrho_x(\mathcal{r} ) d\mathcal{r} 
	\end{align*}
		If we also assume that $p_x$ is slow, so that it changes only slightly during this time step $\tau=\Delta t$ and redefine $\nu_x:=\nu_x/\tau$, then we can write the continuum master equation like:
		\begin{align}
		\label{eq_masteeqrappendixy}
		\frac{\partial }{\partial t} p_x (\kappa,t) =\dot{p}_x (\kappa,t)=&\;\int_{0}^{\kappa} \nu_x \varrho_x(\mathcal{r})\, [ p_x(\kappa - \mathcal{r},t)-p_x(\kappa,t)]d\mathcal{r} 
		\end{align}
		and $ \varrho_x(\mathcal{r})$ is the distribution of the trip distances covered to reach destination blocks of attractiveness $x$.	
		At this point let us apply the Laplace transformation in the variable $\kappa$ so $\mathcal{L}\{p(k,t)\}\equiv\hat{p}(s,t)$ and the master equation transforms as:
		\begin{align*}
		\dot{\hat{p}}(s,t) = &\nu_x\hat{\varrho}_x(s)\hat{p}_x(s,t)-\nu_{x}\hat{p}_x(s,t) = \nu_{x}\left(1-\hat{\varrho}_x(s)\right)\hat{p}_x(s,t)
		\end{align*}
		where the convolution product has been used. The solution can be written as:	
		\begin{align}\label{eq_fullsolutionlaplace}
		\hat{p}(s,t)=&\,ce^{-\nu_x(1-\hat{\varrho}_x(s))t}
		\end{align}		
		with the initial condition $\hat{p}(s,0)=\mathcal{L}\{\delta(k)\}=e^{0}=1$ so that $c=1$.  One can notice that the characteristic function is equivalent to the laplace transform as expressed before. 
		It is possible to write the Laplace transform of the jump distribution $\varrho_x(\mathcal{r})$ in terms of its moments as:
		\begin{align}
		\hat{\varrho}_x(s)= & \sum_{n=0}^{\infty} (-1)^n\frac{s^n}{n!}\mathbb{E}(\kappa^n)
		\end{align}	
		If one assumes that $\varrho(\mathcal{r})$ is a peaked distribution, following the central limit theorem rationale, one can assume it is described by the first two (finite) moments so that the solution in Eq.\eqref{eq_fullsolutionlaplace} can be approximately: 
		\begin{align}
		\hat{p}(s,t)\approx &  e^{-\nu_x t\langle \mathcal{r}\rangle_x s + \frac{1}{2}\nu_x t\langle \mathcal{r}^2\rangle_x s^2}
		\end{align}			
		its inverse Laplace transform  can be considered \cite{berberan2007computation}  the case asymptotic case of $s\ll \frac{\langle \mathcal{r} \rangle_x }{\langle \mathcal{r}^2 \rangle_x}$ of the truncated normal distribution as stated in the eq.\eqref{eq_truncatednormal}
		
		For the second point, the visit (in-strength) distribution  of the mobiliy network is expressed as a  compound probability distribution that results from assuming that a random variable $\kappa$ is distributed according to some parametrized distribution  with the latent parameter $x$ distributed according to some attractiveness distribution \cite{everitt2013finite,sundt2009recursions}. So,  the (unconditional) visiting in-strength distribution results from marginalizing the conditional distribution $p(k,t|x)$ of the non-negative real-valued random variable $x$. So, the probability density function of the visiting distribution is given by the following the mixture density:	
		\begin{align}\label{eq_inmixingfull}
		P(\kappa,t)=&\mathbb{E}\left[ p(\kappa,t|x)\right]=\int_{\Omega_x} p(\kappa,t|x)d\mu_x(x)=\int_{\Omega_x} p(\kappa,t|x) \rho(x)dx
		\end{align}
		Here, $p(\kappa|x,t)$ is the distribution of $\kappa$ when we know $x$ at time $t$, in which the relation between $\kappa$ and $x$ can be seen as deterministic, i.e. $\kappa = F(x,t)=\mathbb{E}_t[\kappa|x]=\nu_{x}\langle  \mathcal{r} \rangle_x t$, defining the distribution by its expected degree value through the moment-generating function from the Laplace transform above. So, $\kappa$ can only be single value, whose distribution is represented by dirac-delta functions $\delta(\kappa-F(x,t))$. 
		Consequently, the empirical visiting density probability function can be written as:
		\begin{align}\label{eq_inmixingtail}
		P(\kappa,t) \sim &\int_{\Omega_x} \delta (\kappa-F(x,t))\rho(x)dx = \int_{\Omega_x} \delta (\kappa-{\nu_x}\langle \mathcal{r} \rangle_x t)\rho(x)dx 
		\end{align} 
		which consists in approximating the visiting probability density by means of a Dirac mixture \cite{tulsyan2016particle,handschin1969monte}, where $\rho(x)$ is the attractiveness probability density. Such procedure is equivalent to a change of variable respect to the deterministic one-to-one function in  the static model as in \cite{caldarelli2002scale}. At this point we use the property:
		$
		\delta(z(x))= \sum_{i=1}^{m}\tfrac{\delta(x-x_0^{(i)})}{|\partial z(x)/\partial x|}
		$
		where $x^{(i)}_0$ are the m-roots of $z(x)=0$ where in the transport model $z(x)=\kappa-\nu_x \langle \mathcal{r} \rangle_x t $ where $z$ is a continuously differentiable function with $z'$ nowhere zero. So:
		\begin{align}
		P(\kappa,t) \sim & \; \sum_{i=1}^{m} \left | \tfrac{\partial z(x)}{\partial x} \right |_{x_0^{(i)}(\kappa)}^{-1}\; \int \delta \Big(x- x_0^{(i)}(\kappa) \Big) \rho (x) dx \nonumber \sim  \; \sum_{i=1}^{m} \left | \tfrac{\partial z(x)}{\partial x} \right |_{x_0^{(i)}(\kappa)}^{-1}  {\rho\big(x_0^{(i)}(\kappa)\big)}
		\end{align} 
		which represent a general formula for the tail behavior of the degree distribution for a mobility network  with a generic attraction rate $\nu_x$. 
		
	\end{proof}
Let us observe that if the trip weight distribution $\varrho_x(\mathcal{r})$ is a dirac delta, then the strength distribution is equivalent to the degree distribution. Another particular case is when the trip weight distribution is identical over the attractiveness variable so that $\varrho_x(\mathcal{r})=\varrho(\mathcal{r})$ then the strength is proportional to the degree $\kappa=\langle \mathcal{r} \rangle k$. In the next remark, the same results can be expressed in terms of stochastic process terminology rather than in terms of an integral-differential  master equation. 
Let us observe that the type of processes described in Proposition \ref{prop1} can be reinterpreted in terms of a mixture of compound Poisson processes as described in the Appendix which represents to be very popular in financial mathematics to model stock prices, insurance claims, and other financial phenomena  \cite{privault2022introduction,tankov2003financial}. The visitation model of the mobility network can also be interpreted in combinatorial terms by using urn processes for solving balls in bins problems and finding the occupancy distributions as sketched in SM4. Such approach is used in world trade literature  as in \cite{armenter2014balls,riccaboni2013global,casiraghi2021configuration}. Despite different approaches, the one introduced in the paper has the advantage to provided direct, though asymptotic, solutions to the scaling relations in the mobility networks.

As a particular choice of the occupation probability $p$, for every couple of origin-destination locations, trips can be realized according to a  kernel function $\mathcal{K} (x , y )$ \cite{caldarelli2002scale,chung2002connected,chung2002average,servedio2004vertex}.
	As a crucial case in the context of scale-free networks, one can consider the attraction rate to be proportional to some power of the destinations' attractiveness:
	\begin{lemma}\label{coro1}
	Le us assume that the attraction rate is of the form $\nu_x= \nu_0 x^{\alpha_0}$, with $\alpha_0 >0$, and homogeneous trip-weight distribution $\varrho_x(\cdot)=\varrho(\cdot)$, the asymptotic trip-visit distribution can be written as:
		\begin{align}
		P(\kappa,t)&\sim  t^{-\frac{1}{\alpha_0}}  \,\kappa^{\frac{1}{\alpha_0}-1} \, \rho (x_0)
		\end{align}
		where $x_0=x_0(\kappa,t)=\left(\frac{\kappa}{\nu_0t}\right)^{\frac{1}{\alpha_0}}$, and where $\rho$ is the attractiveness probability density function. For $\alpha_0=0$ the Erdos–Renyi random graph is recovered.
		
		In the particular case that the attractiveness distribution is $\rho(x)\sim \rho_0 x^{-\eta}$, the visiting in-degree distribution has the following asymptotic tail distribution  
		\begin{equation}\label{eq_visitingscalefree}
		P(\kappa,t) \sim t^{\frac{\eta-1}{\alpha_0}} \kappa^{-(1+\frac{\eta-1}{\alpha_0})}
		\end{equation}
		which shows the typical scale-free structure of an inverse power law distribution for the visiting degree of the mobility network.
	\end{lemma}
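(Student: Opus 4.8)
The plan is to specialize the general tail formula obtained in Proposition~\ref{prop1}, namely $P(\kappa,t)\sim\sum_{i}\left|\partial z(x)/\partial x\right|_{x_{0,i}}^{-1}\rho(x_{0,i})$ with $z(x)=\kappa-\nu_x\langle\mathcal{r}\rangle_x t$, to the prescribed power-law attraction rate. First I would impose the homogeneity hypothesis $\varrho_x(\cdot)=\varrho(\cdot)$, which makes $\langle\mathcal{r}\rangle_x=\langle\mathcal{r}\rangle$ independent of $x$, and then insert $\nu_x=\nu_0 x^{\alpha_0}$, so that $z(x)=\kappa-\nu_0\langle\mathcal{r}\rangle t\,x^{\alpha_0}$ (absorbing the constant $\langle\mathcal{r}\rangle$ into $\nu_0$ recovers exactly the normalization used in the statement). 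Since $x\mapsto x^{\alpha_0}$ is strictly monotone on $\Omega_x\subseteq(0,\infty)$ when $\alpha_0>0$, the equation $z(x)=0$ has the unique root $x_0=(\kappa/(\nu_0 t))^{1/\alpha_0}$, so the sum over roots collapses to a single term ($m=1$) and the $\delta$-composition identity used in the proof of Proposition~\ref{prop1} applies unambiguously.

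Next I would carry out the elementary Jacobian computation. Differentiating, $\partial z/\partial x=-\alpha_0\nu_0 t\,x^{\alpha_0-1}$; evaluating at $x_0$ and using $x_0^{\alpha_0}=\kappa/(\nu_0 t)$ gives $|\partial z/\partial x|_{x_0}=\alpha_0(\nu_0 t)^{1/\alpha_0}\kappa^{1-1/\alpha_0}$, whose reciprocal is $\propto t^{-1/\alpha_0}\kappa^{1/\alpha_0-1}$. Multiplying by $\rho(x_0)$ yields the first displayed identity. The second one then follows by substituting $\rho(x)\sim\rho_0 x^{-\eta}$: here $\rho(x_0)=\rho_0 x_0^{-\eta}\propto t^{\eta/\alpha_0}\kappa^{-\eta/\alpha_0}$, and combining the powers of $t$ and of $\kappa$ gives $P(\kappa,t)\sim t^{(\eta-1)/\alpha_0}\kappa^{-(1+(\eta-1)/\alpha_0)}$, i.e. the announced scale-free tail.

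The degenerate case $\alpha_0=0$ I would treat separately, since the formula above is singular there: when $\nu_x\equiv\nu_0$ the arrival intensity no longer depends on the destination type, so the conditional visit law of Eq.~\eqref{eq_truncatednormal} is identical for every $x$, and the mixture in Eq.~\eqref{eq_visitingdistr} reduces to a single peaked distribution centred at $\nu_0\langle\mathcal{r}\rangle t$. This is precisely the homogeneous (uniform) edge-attachment of the Erd\H{o}s--R\'enyi random graph, with the classical Poisson-type in-degree profile in the unweighted limit $\varrho\to\delta$.

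Everything above is routine algebra once Proposition~\ref{prop1} is in hand; the only genuine subtlety is already absorbed into that proposition, namely the legitimacy of the Dirac-mixture (deterministic change-of-variable) approximation of $P(\kappa,t)$ and the restriction to the asymptotic/tail regime $t\to\infty$. The one extra point I would flag is the need to check that $x_0(\kappa,t)=(\kappa/(\nu_0 t))^{1/\alpha_0}$ remains in the support $\Omega_x$, so that $\rho(x_0)$ is meaningful and the single-root count is correct; for $\alpha_0>0$ this is immediate from monotonicity, which is why the argument is clean exactly in the regime where the scale-free behaviour is of interest.
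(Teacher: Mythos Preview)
Your proposal is correct and is exactly the computation the paper has in mind: the statement is presented there as an immediate specialization of Proposition~\ref{prop1} (no separate proof is given), and your change-of-variable/Jacobian calculation with the unique root $x_0=(\kappa/(\nu_0 t))^{1/\alpha_0}$ is precisely how the asymptotic formula in Eq.~\eqref{eq_visitingdistr} collapses under $\nu_x=\nu_0 x^{\alpha_0}$ and $\langle\mathcal{r}\rangle_x=\langle\mathcal{r}\rangle$. Your separate treatment of $\alpha_0=0$ and the remark about $x_0\in\Omega_x$ are sensible additions that the paper leaves implicit.
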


The analytical results in the previous remark is  confirmed by numerical integration of the compound distribution eq.\eqref{eq_visitingdistr} by using the  truncated normal conditional probability eq.\eqref{eq_truncatednormal}. Moreover, a  graph process is performed via monte carlo (MC) simulation of the network where occupation probability is expressed via a separable linking function $ \mathcal{K}(x,y)=g(x)h(y)$ for the evolution of the adjacency matrix. Such kernel gives arise to an attraction rate as $\nu_x= \nu_0 g(x)$ , where $\nu_0$ is a normalization constant as shown in details in the supplementary materials. So by choosing $g(x)=x^{\alpha_0}$  we are in the case as specified in  Remark \ref{coro1}. At this point, it is possible to compare the three approaches and confirm the consistency of results obtained. In Fig.\ref{fig_3frames}   it can be observed how the three approaches provide the same vist distribution for a particular choice of the parameters as discussed in the caption. 
	As expressed in different research works \cite{balogh2019generalised,masuda2004analysis,fujihara2010universal,di2022score,servedio2004vertex}, different combinations of the attraction rate and attractiveness distribution can generate the same trip-visit distribution. In particular a scale-free behavior can be recovered trough   exponential drivers where the attraction rate is in the  form $\nu_x= \nu_0 e^{\gamma x}$ and the attractiveness distribution is $\rho(x)\sim \rho_0 e^{-\lambda x}$. Also in this case the the asymptotic trip-visit distribution is scale free, in particular, via the transformation variable $x_0=x_0(\kappa,t)=\frac{1}{\gamma}\log \frac{\kappa}{\nu_0 t}$, the distribution becomes $P(\kappa,t) \sim t^{\frac{\lambda}{\gamma}} \kappa^{-(1+\frac{\lambda}{\gamma})}$. 
	\begin{figure}[!ht]
	\centering
\begin{subfigure}[c]{0.65\textwidth}
	\includegraphics[width=0.95\linewidth]{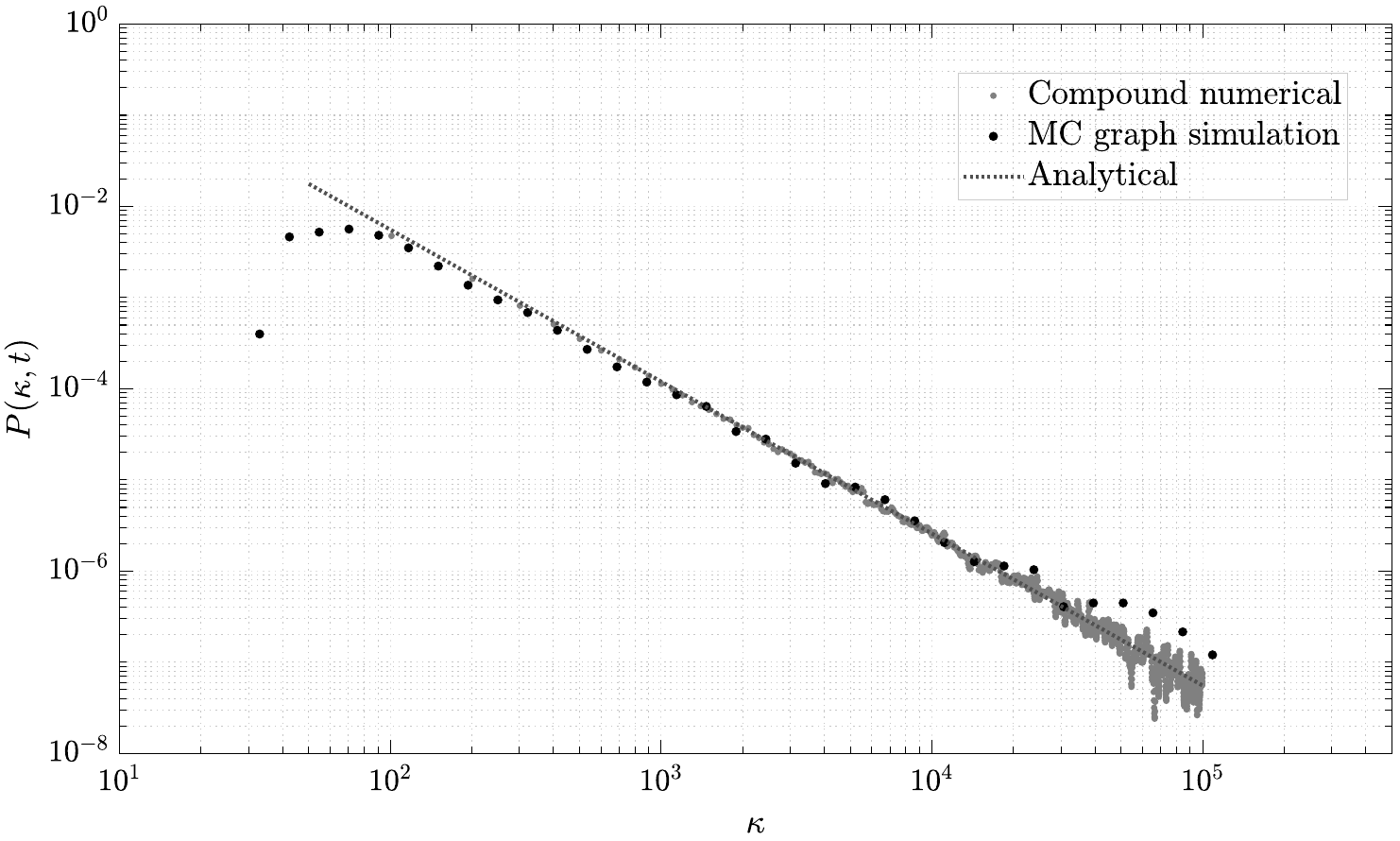}
\caption{ }
\label{fig_3appro}
\end{subfigure}
	\centering
\begin{subfigure}[c]{0.3\textwidth}
	\includegraphics[width=0.95\linewidth]{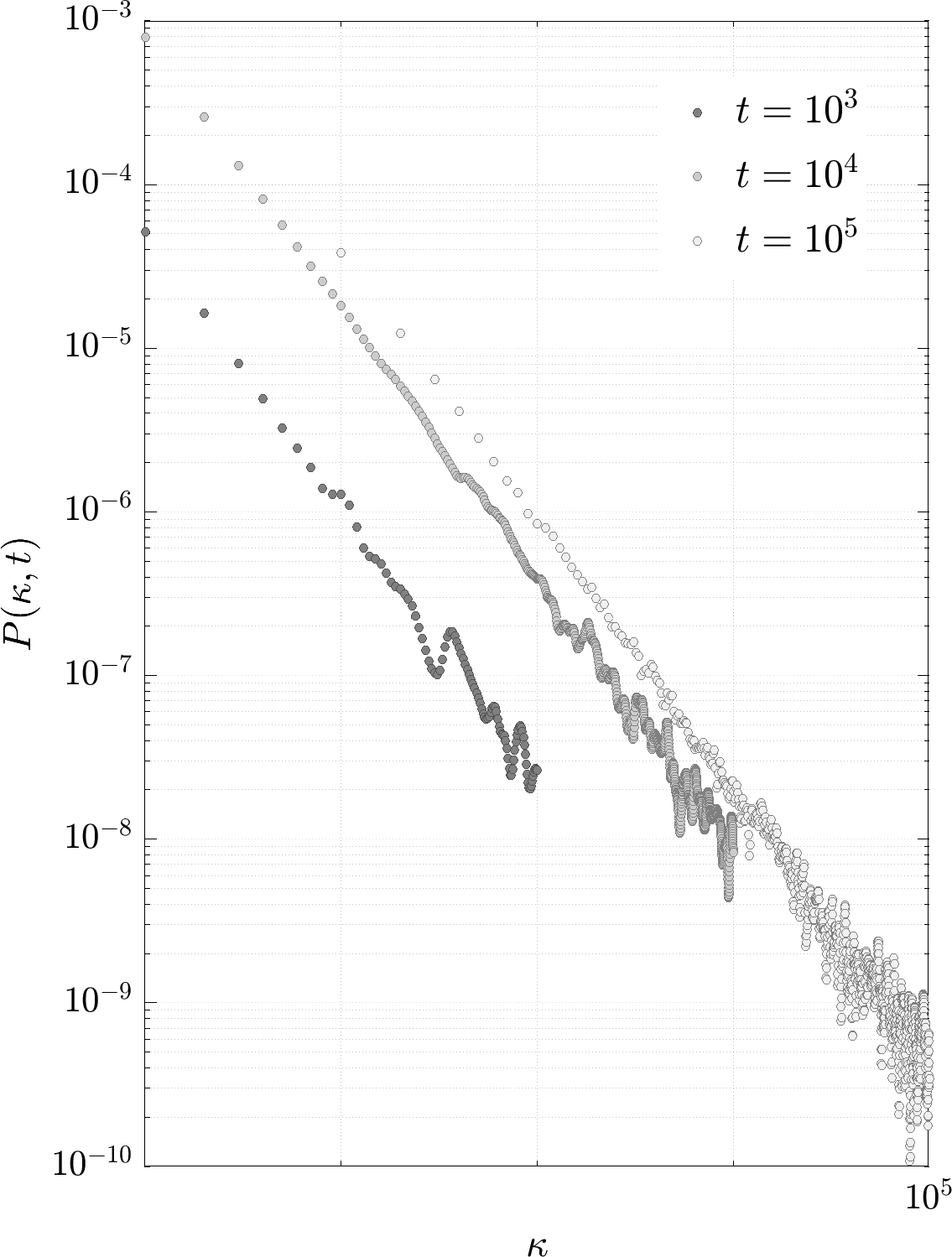}
	\caption{}
	\label{fig_citydiffuse2}
\end{subfigure}
\caption{Visit distribution computed  in the case of constant trip weights,  and the  attractiveness distribution is $\rho(x)\sim\rho_0x^{-2}$  and kernel function $ \mathcal{K}(x,y)\propto x^{1.5}h(y)$ so that  $\nu_x=\nu_0 x^{1.5}$. In (a), the probability density $P(\kappa)$ has been estimate with three different approaches: (1) it is evaluated through the numerical integration of the compound  probability as in eq.\eqref{eq_visitingdistr}. (2) It is evaluated through montecarlo (MC) graph simulation of sequential adjacency matrices with $N=300$ locations and with a simulation time of $t=10^5$ time steps. The three approaches provide the same scale-free behavior of the visit distribution as $P(\kappa,t) \sim t^{\frac{2}{3}} \kappa^{-\frac{5}{3}}$ as expected by the analytical asymptotic estimate eq.\eqref{eq_visitingscalefree}. In (b) the compound distribution approach is calculated at three different time snapshots.   }\label{fig_3frames}
\end{figure}	
%
In particular, there is a model selection issue, since different choices of attractiveness features can generate the same effect on the strength-distribution, one could clarify the ambiguity  investigating higher order characterization of the degree distribution of the mobility network.

	\subsection{Visit correlations}
	A more detailed characterization concerns the exploration of the connectivity correlations in the origin-destination correspondences of trip mobility network.  Higher order statistics of a network in the degree space, can be obtained trough by the conditional probability  $P(\kappa^{(1)}, \kappa^{(2)}, \ldots, \kappa^{(k)}|\kappa',t)$ that a node with strength $\kappa'$ connects to nodes with strength $\kappa^{(1)}, k^{(2)},\ldots, \kappa^{(k)}$ at time $t$. 
	The simplest of these  degree correlations is the two-point correlation being described by the conditional probability $P(\kappa|\kappa',t)$ as  the probability that a trip departing from an origin location of out-strength (departure) $\kappa'$ reaches a destination node of in-strength (visit) $\kappa$. 
	The correlations between degrees of the nearest-neighbouring vertices  are described by the probability distribution:
	\begin{equation*}
	P(k,k',t)=\sum_{\{\mathcal{A}\}}\sum_{ij}\delta{(k-k_i)}	\mathbb{P}(\mathcal{A},t) \delta{(k_j-k')}
	\end{equation*}
	However, the empirical evaluation of such conditional probability in real networks is  cumbersome,  so the weighted degree-degree correlations are commonly accounted by average-nearest-neighbor's strength function $k_{nn}(\kappa,t)$ which makes use of a smoothed conditional probability \cite{latora2017complex} often used as a measure of degree homophily of the nodes. In the latent variable framework, as shown in Fig.\ref{fig_dual},  the conditional assortativity  $k_{nn} (x)$ measures how much a location with attractiveness $x$ tend to be a destination of an origin location of population $y$ as defined in \cite{boguna2003class,serrano2007correlations,murase2019sampling}. 
	In a similar way, the three-point correlations can be studied in terms of the clustering coefficient spectrum $c(\kappa,t)$ which indicates the probability that two neighbors of strength-$\kappa$ node are neighbors themselves. In the case of weighted and directed networks there many different ways to define the cluster coefficient \cite{fardet2021weighted,fagiolo2007clustering}. At the latent variable level, the conditional clustering coefficient  of a destination with attractiveness $x$ can be interpreted as the probability that two randomly chosen locations with  trips towards a destination with attractiveness $x$ are neighbors. Consequently, the Markovian property at the latent variable level \cite{boguna2002epidemic,serrano2007correlations,jiang2014topological} allows to calculate analytical expressions for the assortativity $ k_{nn}(\kappa)$, quantifying two vertices correlations, and clustering coefficient spectrum $c(\kappa)$, as a measure of three vertices correlation. A very important result is that the degree correlations of trip-visit distributions are completely determined by the attraction (and production) rate and by the origin-destination conditional probability $\chi(y|x)$.
			
	\begin{figure}[!ht]
	\centering
	\begin{subfigure}[l]{0.4\textwidth}
		\centering
		\includegraphics[width=0.65\linewidth]{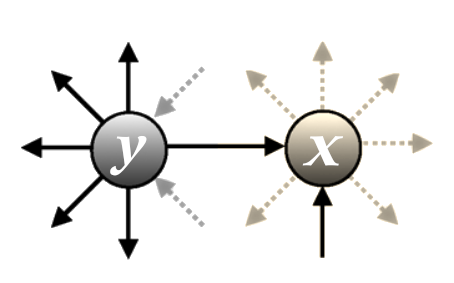}
		\caption{destination assortativity for the node $x$}
		\label{fig_dual}
	\end{subfigure}
	\begin{subfigure}[r]{0.4\textwidth}
		\centering
		\includegraphics[width=0.75\linewidth]{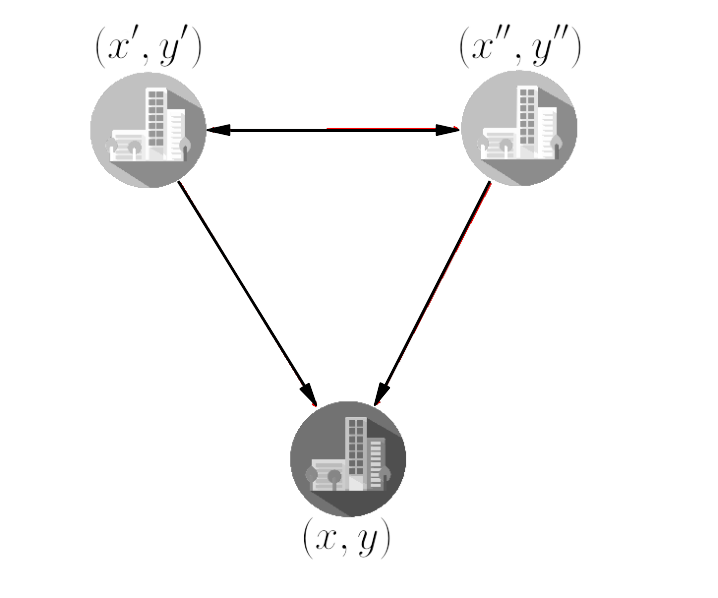}
		\caption{destination clustering for the node $(x,y)$}
		\label{fig_triads}
	\end{subfigure}
\caption{The two point and three point correlations of the trip mobility network can be calculated in terms of the hidden variables $x$ and $y$. In particular the conditional average-nearest-neighbor's strength at the latent variable level (a) is the in-out strength (origin-destination) assortativity coefficient,  and the cluster coefficient at the latent variable level (b) is the "In" clustering (or destination clustering) coefficient for weighted and directed networks as \cite{fardet2021weighted,fagiolo2007clustering}, i.e.  a triangle such that there are two trips coming into of the destination node $(x \leftarrow y', x \leftarrow y'', x' \leftarrow y'' \vee x'' \leftarrow y')$.
}
\end{figure}	
	\begin{proposition}\label{propANND}
		 In the visiting mobility network under the latent variable assumption, the origin-destination correlation is defined as the conditional probability that a visit in the destination of attractiveness $x$ has originated from a location of population $y$, and it is written as:
		\begin{equation}\label{eq_phip}
		\chi(y|x)= \frac{\partial }{\partial y}\log \mathcal{V}_x
		\end{equation}
		As consequence, the following estimates of the two-point and three-point correlations hold:
	\begin{itemize}	
		\item the average out-strength of origin neighbors of  destinations with in-strength $\kappa$,  can be written as:
		\begin{align}
		 {k}_{nn} (\kappa,t) & \sim\frac{t}{P(\kappa,t)}\iint \nu_{y}p(\kappa,t|x) \chi(y|x) \rho(x)  dy dx
		\end{align}
		
		If destinations and origins are independent $\chi(y|x)=\chi(y)$ and $	\langle {k}_{nn}\rangle(\kappa,t)  = \text{const.}$
	 
  \item	  the clustering coefficient for destinations of in-strength $\kappa$ is:
		\begin{align}
		c_{}(\kappa,t)\sim  & \frac{1}{2\nu_0P(\kappa,t)}\iiint p(\kappa,t|x)\rho(x) \left(\nu_{y'}+\nu_{y^{\prime \prime}}\right)\,\chi(y^{\prime}|x)\,\chi(y^{\prime \prime}|x)dy'dy''dx
		\end{align}
\end{itemize}
	\end{proposition}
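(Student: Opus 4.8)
The plan is to obtain both correlation functions from the degree-class (continuous mean-field) decomposition already exploited for Proposition~\ref{prop1}, after first identifying $\chi(y|x)$. The expression \eqref{eq_phip} is essentially Bayes' rule at the level of the trip-generation intensity: by A.\ref{a_2} the joint infinitesimal rate of a trip that ends in a destination of attractiveness $x$ and started from an origin of productiveness in $[y,y+dy]$ is $d\mathcal{V}_x=\mathcal{K}(x,y)\phi(y)\,dy$, whose $\Omega_y$-marginal is the attraction rate $\nu_x$ computed in the proof of Proposition~\ref{prop1}; hence the conditional law of the origin type given the destination type is $\chi(y|x)=\mathcal{K}(x,y)\phi(y)/\nu_x$, equivalently written as \eqref{eq_phip}, and $\int_{\Omega_y}\chi(y|x)\,dy=1$ by construction.

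For the average nearest-neighbour out-strength I would argue as follows. By A.\ref{a_1} and the mean-field hypothesis, a destination seen to carry in-strength $\kappa$ at time $t$ has attractiveness $x$ with posterior density $p(\kappa,t|x)\rho(x)/P(\kappa,t)$, with $p(\kappa,t|x)$ and $P(\kappa,t)$ the conditional and marginal visit distributions of Proposition~\ref{prop1}. Conditionally on $x$, the latent-level Markov property makes every in-neighbour an origin whose type $y$ is an independent draw from $\chi(y|x)$, and the departure-process dual of Proposition~\ref{prop1} gives such an origin expected out-strength $\nu_y t$. Averaging $\nu_y t$ over $y\sim\chi(\cdot|x)$ and then over the posterior of $x$ produces $k_{nn}(\kappa,t)=\frac{t}{P(\kappa,t)}\iint \nu_y\,p(\kappa,t|x)\,\chi(y|x)\,\rho(x)\,dy\,dx$; the independent case $\chi(y|x)=\chi(y)$ makes the inner integral a constant that factors out, leaving $k_{nn}$ free of $\kappa$.

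The clustering spectrum is handled the same way, writing $c(\kappa,t)$ as the ratio of the expected number of closed ``in''-triads through a $\kappa$-destination to the expected number of connected triads centred there. With the same posterior weight, a type-$x$ destination supplies $\simeq\binom{k_{\mathrm{in}}(x)}{2}$ ordered pairs of in-neighbours, $k_{\mathrm{in}}(x)\simeq\nu_x t$, each pair closing into a triangle iff its two members are joined by an edge. Conditioning on the origin types $y',y''$ of those two in-neighbours, each an independent $\chi(\cdot|x)$-draw, the probability of a closing edge in the orientation $y''\!\to\! y'$ is obtained by integrating the per-edge kernel over the unconstrained attractiveness of the receiving node against $\rho$, which contributes a term $\propto\nu_{y''}$; summing the two orientations yields $\propto(\nu_{y'}+\nu_{y''})$. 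The kernel normalisation $\mathcal{K}(x,y)=\nu_0\,g(x)h(y)$ of Remark~\ref{coro1} supplies the constant, the factor $2$ removes the double count of orientations, and the growth-time factors together with $k_{\mathrm{in}}(x)^2$ cancel between numerator and denominator, leaving $c(\kappa,t)\sim\frac{1}{2\nu_0 P(\kappa,t)}\iiint p(\kappa,t|x)\rho(x)\,(\nu_{y'}+\nu_{y''})\,\chi(y'|x)\,\chi(y''|x)\,dy'\,dy''\,dx$.

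The hard part will be justifying the latent-level Markov / conditional-independence statement used in both derivations: that once a destination's attractiveness is fixed, the productiveness labels of its in-neighbours are i.i.d.\ $\chi(\cdot|x)$ and the presence of an edge between two of them is asymptotically independent of the event that both point to that destination. This is the familiar locally tree-like / annealed mean-field hypothesis for sparse inhomogeneous random graphs, but here all three edges of a triad are laid down by one and the same sequential growth process, so the shared history creates correlations that would have to be shown to vanish as $n,t\to\infty$. A secondary, bookkeeping-level obstacle is keeping the normalisations straight --- the constant $\nu_0$, the suppressed factor $1/n$, and the linear-in-$t$ accumulation of edges --- so that the time dependence cancels in $c(\kappa,t)$ yet survives linearly in $k_{nn}(\kappa,t)$, and checking that symmetrising over trip orientation gives the sum $\nu_{y'}+\nu_{y''}$ rather than a product.
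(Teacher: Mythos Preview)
Your proposal is correct and follows essentially the same route as the paper: Bayes' rule for $\chi(y|x)$, then the two-step decomposition $k_{nn}(\kappa,t)=\frac{1}{P(\kappa,t)}\int p(\kappa,t|x)\rho(x)\,k_{nn}(x)\,dx$ with $k_{nn}(x)=\int \mathbb{E}[\kappa|y]\,\chi(y|x)\,dy$, and analogously $c(\kappa,t)=\frac{1}{P(\kappa,t)}\int p(\kappa,t|x)\rho(x)\,c(x)\,dx$ with $c(x)$ computed by symmetrising the closing-edge kernel over the two independent $\chi(\cdot|x)$-draws.

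Regarding your ``hard part'': the paper does not prove the latent-level conditional independence either; it simply invokes it as the defining Markovian property of the hidden-variable graph class, citing \cite{boguna2003class,serrano2007correlations,murase2019sampling}, so at the level of this paper you may treat it as an assumption rather than something to be established. Your bookkeeping worries about $\nu_0$ and the time normalisation are resolved in the paper exactly as you anticipate: $\nu_y=\nu_0\int \mathcal{K}(x,y)\rho(x)\,dx$ supplies the constant (no separability is actually needed for the general formula), and the multigraph edge multiplicities are handled by normalising the adjacency matrix by $t$ so that $\bar c_t(x)=t\,c(x)$ cancels against the $1/t$ in the definition of $c(\kappa,t)$.
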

\begin{proof}
	 The conditional origin-destination probability is the conditional probability that a destination block of attractiveness $x$ is connected to an origin block of population $y$ is:
	\begin{equation}\label{eq_phip}
	\chi(y|x)=\frac{\phi(y) \mathcal{K}(x,y)}{\int \phi(y) \mathcal{K}(x,y) dy} = \frac{\partial \mathcal{V}_x(x,y)}{\partial y}\frac{1}{\nu_x(x,y)} =\frac{\partial}{\partial y} \log \mathcal{V}_x(x,y)
	\end{equation}
	where $\mathcal{V}_x$ is the primitive function of the attraction rate $\nu_x$. Io order to write the explicit expression of the origin-destination correlation it is necessary to know the pairing rule, for example trough the connection kernel $ \mathcal{K}(x,y)$ so that $\nu_x=\nu_0\int_{\Omega_y}  \mathcal{K}(x,y)\phi(y) dy$, or imagining a generic primitive function for the attraction rate. So the conditional origin-destination probability is an important indicator for correlations between origins and destinations in the degree-distribution of the visiting mobility network. 
	The conditional average-nearest-neighbor's in-degree for destinations of attractiveness $x$ can be written for directed multigraph in a continuous limit as in \cite{caldarelli2002scale,vanni2021incremental}:
\begin{align}
	k_{nn}(x) & =\int \mathbb{E}[\kappa|y]\, \chi(y|x) dy  
\end{align}
where in the mobility model the conditional expected strength is $E[\kappa|y] \propto \nu_{y} t$. Since, for the markovian degree property, the degree two point correlation can be fully determined by the conditional probability $P(\kappa'|\kappa)$, the average degree of neighbors of an in-degree $\kappa$ destination is known to be calculated as \cite{boguna2003class,murase2019sampling}:
\begin{align*}
	k_{nn}(\kappa,t) & =1+\frac{1}{P(\kappa,t)}\int p(\kappa,t|x)\rho(x) k_{nn}(x)  dx  =1+\frac{t}{P(\kappa,t)}\int \int \nu_y\chi(y|x)p(\kappa,t|x) \rho(x)  dy dx
\end{align*}
which is an in-out (origin-destination) assortativity measure, which is independent of $\kappa$ for $\chi(y|x)=\chi(y)$ as in the case of multiplicative separable linking function $ \mathcal{K}(x,y)$. Let us notice that since the network is directed, other than weighted, one could define, similarly, other three average nearest neighbor's degree functions: destination-origin, origin-origin and destination-destination.

	The clustering coefficient of a destination with attractiveness $x$ can be interpreted as the probability that two
	randomly chosen edges from $x$ are origin-neighbors. The clustering of a destination of degree one or zero is defined as zero.
	In the space of latent variables, consider a destination $i$ of attractiveness $x_i$ and population $y_i$, which is connected with with probability $p(y_j,y_k|x_i)$  through trips originated from two other locations $j$ and $k$ which have attractiveness $x_j$ and $x_k$ and population $y_j$ and $y_k$ respectively. Since the network is markovian at the latent variable level, $p(y_j,y_k|x_i)=p(y_j|x_i)p(y_k|x_i)$. Thus, similarly to the definition in \cite{boguna2003class,van2017local} together with modifications \cite{fardet2021weighted,clemente2018directed} for the directed and weighted case, the local origins-destination clustering for locations of attractiveness $x_i$ can be written as:
		\begin{align*}
c_{}(x_i) &=\sum_{j,k}p\big((x_j,y_j),(x_k,y_k)\big)\,p(y_j,y_k|x_i) =\frac{\sum_{j,k} \tfrac{1}{2}\big(\mathcal{K}(x_j,y_k)+\mathcal{K}(x_k,y_j)\big)\,\mathcal{K}(x_i,y_j)\,\mathcal{K}(x_i,y_k)}{\sum_{j,k} \mathcal{K}(x_i,y_j)\,\mathcal{K}(x_i,y_k)}
\end{align*}
where $p\big((x_j,y_j),(x_k,y_k)\big)$ is the probability that the two origin nodes are connected one to the other in both directions.
Now, in the asymptotic continuous regime  the clustering coefficient can be rewritten:
		\begin{align*}
c(x) &=M\iiiint \tfrac{1}{2} (\mathcal{K}(x',y'')+ \mathcal{K}(x'',y'))\,\mathcal{K}(x,y')\,\mathcal{K}(x,y'')\, \rho(x')\rho(x'')\phi(y')\phi(y'')dx' dx''dy'dy''
\end{align*}		
where $M=(\int \mathcal{K}(x,y')\phi(y')dy' \int \mathcal{K}(x,y'')\phi(y'')dy'')^{-1}$, so it is possible to write:
\begin{align*}
c(x)&=\iiint \chi(y'|x)\mathcal{K}(x',y'')\chi(y''|x)\rho(x')dx'dy'dy'' 
=\frac{1}{2\nu_0}\iint \left(\nu_{y'}+\nu_{y''}\right)\,\chi(y'|x)\,\chi(y''|x)dy'dy''
\end{align*}
knowing that $\nu_y=\nu_0\int_{\Omega_x}  \mathcal{K}(x,y)\rho(x) dx$ and the definition of $\chi(y|x)$.
Let us notice that for independent origins and destinations then $c(x)= c_0=const$. Moreover in the case of clustering coefficient in a multigraph one can calculate the number of triangles repeated $\kappa$ times, which in a markovian graph can be approximated on average as $\overline{c}_{t}(x)=tc(x)$, since at each time step a possible link is considered as a bernoullian trail, so that the observed number of links in $t$ trials follows a binomial distribution and so the expected value is $t \mathcal{K}(x,y)$. Consequently, since the clustering coefficient has values in $[0,1]$, we normalize the adjacency matrix respect to $t$, so that  the average local clustering coefficient of a node with strength $\kappa$, denoted by $c(\kappa)$, \cite{serrano2007correlations,boguna2003class,stegehuis2017clustering,murase2019sampling} is given  by:
		 \begin{align*}
		 c(\kappa,t)&=\frac{1}{tP(\kappa,t)}\int p(\kappa,t|x)\overline{c}_{t}(x)\rho(x)dx
		 =\frac{1}{P(\kappa,t)}\int p(\kappa,t|x)c(x)\rho(x)dx
		 \end{align*}
		 which represents the local in-clustering spectrum for destination locations and it is independent of $\kappa$ for $\chi(y|x)=\chi(x)$ as in the case of multiplicative separable linking function $ \mathcal{K}(x,y)$, where $P(\kappa,t)$ represents the in-strength distribution.	 
Let us notice that in the case of the clustering coefficient the adjacency matrix and the latent variables are needed to be normalized in order to transform a multigraph in a weighted graph and from there a clustering coefficient no larger than 1 is guaranteed.
\end{proof}

The Markovian nature of this class of networks implies that all higher-order correlations can be expressed as a
function of the attraction and production rates $\nu_x,\nu_{y}$ and the conditional origin-destination probability $\chi(y|x)$, allowing an exact treatment of mobility models at the mean-field  level. 	 
Under the hypothesis that origins and destinations are independent, that is $\chi(y|x)=\chi(y)$, then the  average-nearest-neighbor's strength function  and the clustering coefficient are constant along $\kappa$ as shown, as an example, in the simulation shown in Fig.\ref{fig_ANNDsimul} and Fig.\ref{fig_clustersimul}. Consequently, for neutral networks the two and three point correlations can be obtained with three different approaches that will provide the same estimate by using the input approach of latent variables ('latent estimate'), by using the output approach trough the  adjacency matrix ('expected value') and, finally, the algorithm computation of assortativity and clustering coefficients for directed and weighted networks ('simulation approach'). 
	
	\begin{lemma}\label{coro2}
Under the hypothesis that visit production process and visit attraction process are independent 
 the average in-strength of nearest neighbor function is constant, and in the asymptotic limit:
 \begin{align}
 k_{nn}(\kappa,t)\sim \frac{t\mathbb{E}[h^2(y)]}{N\mathbb{E}[h(y)]^2}=\frac{\langle \kappa_{out}^2\rangle}{\langle \kappa_{out}\rangle}
 \end{align}  
 \medskip
 As regard with the clustering coefficient spectrum, under the same hypothesis:
 \begin{align}
 c(\kappa)\sim \frac{\mathbb{E}[g({x})]\mathbb{E}[h^2({y})]}{\mathbb{E}[h({y})]}=\frac{\langle \kappa_{in} \rangle}{tN}\left( \frac{\langle \kappa_{out}^2 \rangle -\langle \kappa_{out} \rangle}{\langle \kappa_{out} \rangle^2}\right)^2
 \end{align}
\end{lemma}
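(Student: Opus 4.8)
\noindent\emph{Proof idea.} The statement is obtained by specialising Proposition~\ref{propANND} to the case in which the trip production and the trip attraction processes are independent, which at the level of the linking kernel means $\mathcal{K}(x,y)=g(x)\,h(y)$ (multiplicatively separable). The first step is to record what separability does to the ingredients of Proposition~\ref{propANND}: from $\nu_x=\nu_0\int_{\Omega_y}\mathcal{K}(x,y)\phi(y)\,dy$ and $\nu_y=\nu_0\int_{\Omega_x}\mathcal{K}(x,y)\rho(x)\,dx$ one gets $\nu_x=\nu_0\,g(x)\,\mathbb{E}[h(y)]$ and $\nu_y=\nu_0\,h(y)\,\mathbb{E}[g(x)]$, while the origin--destination correlation $\chi(y|x)=\phi(y)\mathcal{K}(x,y)\big/\!\int\phi(y)\mathcal{K}(x,y)\,dy$ collapses to $\chi(y|x)=\phi(y)h(y)/\mathbb{E}[h(y)]=:\chi(y)$, which no longer depends on $x$ and integrates to $1$. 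This is exactly the hypothesis under which Proposition~\ref{propANND} already announces that $k_{nn}$ and $c$ are flat in $\kappa$, so all that remains is to carry out the now-decoupled integrals and re-cast the answers through the empirical strength moments.

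For the assortativity I would substitute $\chi(y|x)=\chi(y)$ into the expression for $k_{nn}(\kappa,t)$ from Proposition~\ref{propANND}. The $y$-integral then factors out of the $x$-integral, and the leftover $x$-integral is exactly $\int p(\kappa,t|x)\rho(x)\,dx=P(\kappa,t)$, which cancels the prefactor; what survives is $k_{nn}(\kappa,t)=1+t\int_{\Omega_y}\nu_y\,\chi(y)\,dy$, manifestly $\kappa$-independent. Evaluating the integral gives $t\int\nu_y\chi(y)\,dy=t\,\nu_0\,\mathbb{E}[g(x)]\,\mathbb{E}[h^2(y)]/\mathbb{E}[h(y)]$; dropping the additive $1$ in the $t\to\infty$ regime and inserting the kernel normalisation fixed by ``one new visit per unit time over $N$ nodes'', i.e.\ $N\langle\nu_x\rangle=1$ so $\nu_0=1/\bigl(N\,\mathbb{E}[g(x)]\,\mathbb{E}[h(y)]\bigr)$, produces $k_{nn}(\kappa,t)\sim t\,\mathbb{E}[h^2(y)]\big/\bigl(N\,\mathbb{E}[h(y)]^2\bigr)$. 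To identify this with $\langle\kappa_{out}^2\rangle/\langle\kappa_{out}\rangle$ I would invoke the compound-Poisson representation of the Appendix: conditionally on the origin type $y$ the out-strength is (asymptotically) Poisson with mean $t\nu_y$, hence $\langle\kappa_{out}\rangle=t\,\mathbb{E}[\nu_y]$ and $\langle\kappa_{out}^2\rangle-\langle\kappa_{out}\rangle=t^2\,\mathbb{E}[\nu_y^2]$, so for large $t$, $\langle\kappa_{out}^2\rangle/\langle\kappa_{out}\rangle\sim t\,\mathbb{E}[\nu_y^2]/\mathbb{E}[\nu_y]=t\,\nu_0\,\mathbb{E}[g]\,\mathbb{E}[h^2]/\mathbb{E}[h]$, which coincides with the previous expression.

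The clustering identity follows by the same route. Substituting $\chi(y'|x)=\chi(y')$ and $\chi(y''|x)=\chi(y'')$ into the triple integral of Proposition~\ref{propANND}, the $x$-integral again reduces to $P(\kappa,t)$ and cancels; using the $y'\!\leftrightarrow\! y''$ symmetry of $\nu_{y'}+\nu_{y''}$ together with $\int\chi(y)\,dy=1$, the double integral collapses to $\tfrac{1}{\nu_0}\int\nu_y\chi(y)\,dy=\mathbb{E}[g(x)]\,\mathbb{E}[h^2(y)]/\mathbb{E}[h(y)]$, again independent of $\kappa$. Re-expressing the $g$- and $h$-moments through the strength moments as above — $\langle\kappa_{in}\rangle=t\,\mathbb{E}[\nu_x]$, and $\bigl(\langle\kappa_{out}^2\rangle-\langle\kappa_{out}\rangle\bigr)/\langle\kappa_{out}\rangle^2=\mathbb{E}[\nu_y^2]/\mathbb{E}[\nu_y]^2=\mathbb{E}[h^2]/\mathbb{E}[h]^2$ (from the Poisson identity) — and then collecting the powers of $t$ and $N$ (this is where the normalisations must be tracked with care, see below) yields the claimed form $c(\kappa)\sim(\langle\kappa_{in}\rangle/tN)\bigl((\langle\kappa_{out}^2\rangle-\langle\kappa_{out}\rangle)/\langle\kappa_{out}\rangle^2\bigr)^2$.

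The analytic content here — the disappearance of the $x$-dependence, the cancellation of $P(\kappa,t)$, and the elementary first/second-moment integrals against $\chi(y)$ — is routine. The step I expect to require real care is the bookkeeping of the normalisation constant $\nu_0$ and of the $t$- and $N$-rescalings of the adjacency matrix (the rescaling by $t$ already invoked in the proof of Proposition~\ref{propANND} to turn the multigraph into a weighted graph with clustering in $[0,1]$): the ``latent-variable'' closed forms $t\,\mathbb{E}[h^2]/(N\,\mathbb{E}[h]^2)$, $\mathbb{E}[g]\,\mathbb{E}[h^2]/\mathbb{E}[h]$ and the ``empirical-moment'' forms agree only once these factors are inserted consistently, so the displayed chains of equalities should be read as identities between the same quantity written in different unit conventions, and pinning down every power of $N$ and $t$ is the crux of the argument.
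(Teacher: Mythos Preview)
Your proposal is correct and, on the latent-variable side, is exactly what the paper does: specialise Proposition~\ref{propANND} to a separable kernel $\mathcal{K}(x,y)=g(x)h(y)$, observe that $\chi(y|x)=\phi(y)h(y)/\mathbb{E}[h(y)]$ is $x$-free, factor the double/triple integrals so that the $x$-part becomes $\int p(\kappa,t|x)\rho(x)\,dx=P(\kappa,t)$ and cancels, and then evaluate the surviving $y$-integrals.

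The one noteworthy difference is in how the equality with the empirical-moment forms $\langle\kappa_{out}^2\rangle/\langle\kappa_{out}\rangle$ and $\tfrac{\langle\kappa_{in}\rangle}{tN}\bigl(\tfrac{\langle\kappa_{out}^2\rangle-\langle\kappa_{out}\rangle}{\langle\kappa_{out}\rangle^2}\bigr)^2$ is obtained. You translate the latent-variable answers into strength moments via the compound-Poisson identities of the Appendix ($\langle\kappa_{out}\rangle=t\,\mathbb{E}[\nu_y]$ and $\langle\kappa_{out}^2\rangle-\langle\kappa_{out}\rangle=t^2\,\mathbb{E}[\nu_y^2]$), which gives a single unified chain from kernel to degree moments. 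The paper instead re-derives the right-hand sides \emph{independently} in degree space, starting from the textbook uncorrelated-network identity $P(\kappa_{out}\,|\,\kappa_{in})=\kappa_{out}P(\kappa_{out})/\langle\kappa_{out}\rangle$ for $k_{nn}$ and the analogous three-point factorisation (following Dorogovtsev--Serrano) for $c(\kappa)$; the two displayed equalities are thus presented as two parallel computations that meet, rather than as one computation rewritten in two notations. Your route is a little more economical; the paper's has the virtue of making explicit that the degree-moment expressions are the standard neutral-network formulas. Your caveat about the $t$, $N$, and $\nu_0$ bookkeeping is well placed: the paper is equally informal there, simply invoking a normalisation of the multigraph adjacency ``in order to provide a weighted matrix with link weights not larger than one'' without spelling out the constants.
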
	\label{eq_clusteruncorr}
\begin{proof}
The equations for the average in-strength of nearest neighbors and the in-clustering coefficient can be derived directly from Proposition \ref{propANND} after some algebraic manipulations considering the production rate $\nu_y$ and the conditional probability $\chi(y|x)$ are independent from $x$ as, in the case when the attraction and production rates are recovered from a linking function that is multiplicative separable, i.e. $ \mathcal{K}(x,y)=g(x)h(y)$. In fact, when origins and destinations are independent, that is $\chi(y|x)=\chi(y)$, then the  average-nearest-neighbor's strength function 
$k_{nn}(\kappa,t) =1+ t\int \nu_y \chi(y)dy$
which is a constant over the strength degrees $\kappa$ and where ${E[g(x)]}$ and ${E[h(y)]}$ are the expectation values of the function $g(x)$ and $h(y)$ under the circumstances they have finite values, which occurs even for fat-tail distributions in a finite set for the latent variables \cite{vanni2021incremental}. In the case of infinite moments then the equation above represents a unreliable estimation but still shows the neutral assortativity of the graph. 
 As regard with the clustering coefficient, the result in \eqref{eq_clusteruncorr} is straightforward by using the multiplicative separable linking function as above, with the only difference that it has been normalized in order to provide a weighted matrix with link weights not larger that one.
Another approaches for the estimates of the assortativity and clustering coefficient can be derived in terms of the strength degrees of the nodes as provided in \cite{boguna2003class,serrano2007correlations,latora2017complex} with proper modifications for directed weighted graph, see \cite{boguna2005generalized,serrano2006correlations,vanni2021incremental}.
Their expected values for $k_{nn}(\kappa)$ and $c(\kappa)$ are analytically known in literature for neutral networks, i.e. no degree correlations, as derived in terms of degrees the expected average in-strength of nearest neighbors can be written:
\begin{align*}
k_{nn}(\kappa_{in})&=\sum_{\kappa_{out}}\kappa_{out}P(\kappa_{out}|\kappa_{in})	=\sum_{\kappa_{out}}\kappa_{out}\frac{\kappa_{out}P(\kappa_{out})}{\langle \kappa_{out}\rangle}=\tfrac{\langle \kappa_{out}\rangle^2}{\langle \kappa_{out}\rangle}=\langle k_{nn}\rangle= \mathbb{E}[k^{(u)}_{nn}]
\end{align*}
where in the absence of correlations $P(\kappa_{out}|\kappa_{in})=k_{out}P(\kappa_{out})/\langle \kappa_{in}\rangle$ has been used.
In the case of clustering coefficient, after normalization of the multigraph, the in-clustering coefficient can be written as in \cite{dorogovtsev2004clustering,serrano2007correlations}:
 \begin{align*}
  c(\kappa)=&\sum_{\kappa'_{out},\kappa''_{out}}\frac{(\kappa'_{out}-1)(\kappa''_{out}-1)}{tN\kappa''_{out}P(\kappa''_{out})}P(\kappa''_{out}|\kappa'_{out})P(\kappa''_{out}|\kappa_{in})P(\kappa'_{out}|\kappa_{in})  \\
 =&\frac{\langle \kappa_{in} \rangle^3 }{tN\kappa_{in}^2P^2(\kappa_{in})}\sum_{\kappa'_{out},\kappa''_{out}} \frac{(\kappa'_{out}-1)(\kappa''_{out}-1)P(\kappa''_{out},\kappa'_{out})P(\kappa''_{out},\kappa_{in})P(\kappa'_{out},\kappa_{in})}{\kappa'_{out}\kappa''_{out}P(\kappa'_{out})P(\kappa''_{out})}
  \end{align*}
That in the case of uncorrelated networks:
\begin{align*}
c(\kappa) =\frac{\langle \kappa_{in} \rangle^3}{tN\kappa_{in}^2P^2(\kappa_{in})} \sum_{\kappa'_{out},\kappa''_{out}}&\frac{(\kappa'_{out}-1)(\kappa''_{out}-1)}{\kappa'_{out}\kappa''_{out}P(\kappa'_{out})P(\kappa''_{out})}\cdot \\ &
 \cdot\frac{k''_{out}P(k''_{out})}{\langle k''_{out}\rangle}\frac{k'_{out}P(k'_{out})}{\langle k'_{out}\rangle}\cdot 
 \frac{k''_{out}P(k''_{out})}{\langle k''_{out}\rangle}\frac{k_{in}P(k_{in})}{\langle k_{in}\rangle}\cdot 
  \frac{k'_{out}P(k'_{out})}{\langle k'_{out}\rangle}\frac{k_{in}P(k_{in})}{\langle k_{in}\rangle}\\
   =\frac{\langle \kappa_{in} \rangle}{tN\langle \kappa_{out} \rangle^4 } \sum_{\kappa''_{out}}(\kappa''_{out}-1&)\kappa''_{out}P(\kappa''_{out})\sum_{\kappa'_{out}}(\kappa'_{out}-1)\kappa'_{out}P(\kappa'_{out})=\langle c\rangle =\mathbb{E}[c^{(u)}]
     \end{align*}
\end{proof}
	
\begin{figure}[!ht]\label{fig_nocorre}
	\centering
	\begin{subfigure}[l]{0.65\textwidth}
		\centering
		\includegraphics[width=0.95\linewidth]{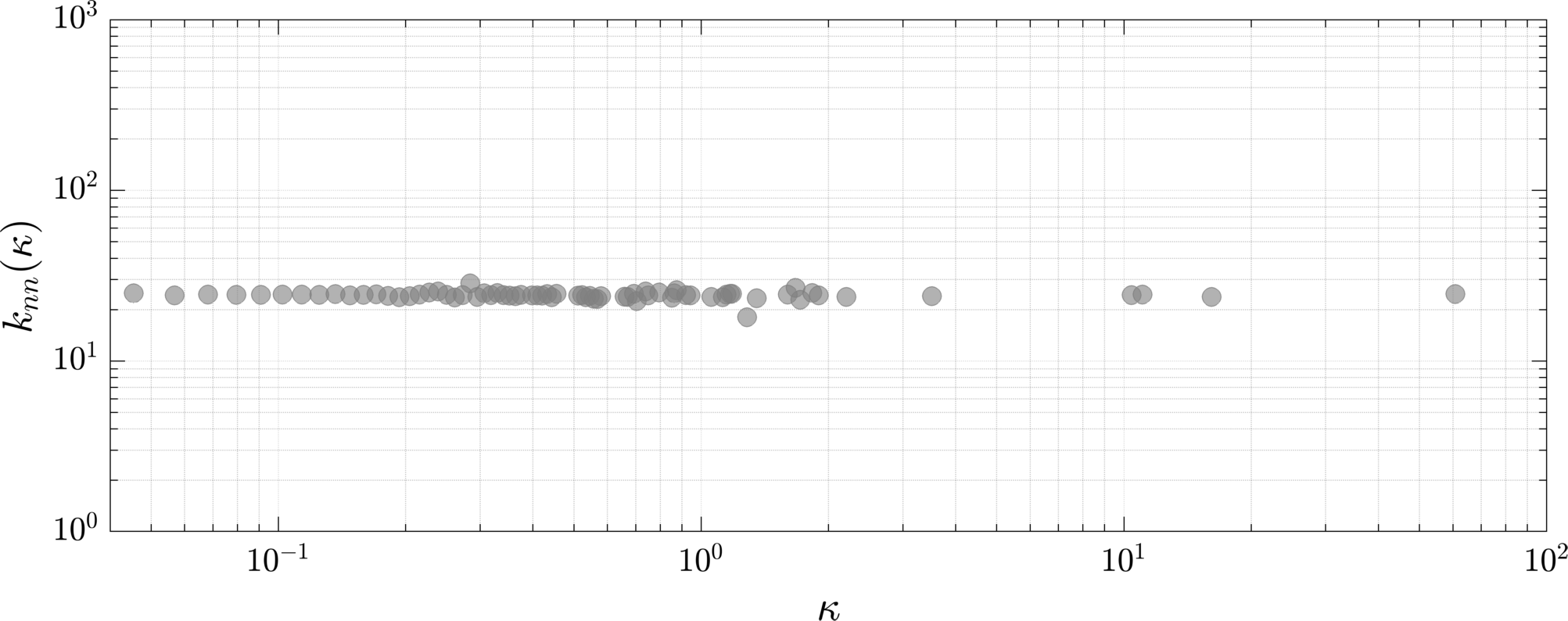}
		\caption{destination assortativity spectrum}
		\label{fig_ANNDsimul}
	\end{subfigure}
\\
\vspace{0.5cm}
	\begin{subfigure}[r]{0.65\textwidth}
		\centering
		\includegraphics[width=0.95\linewidth]{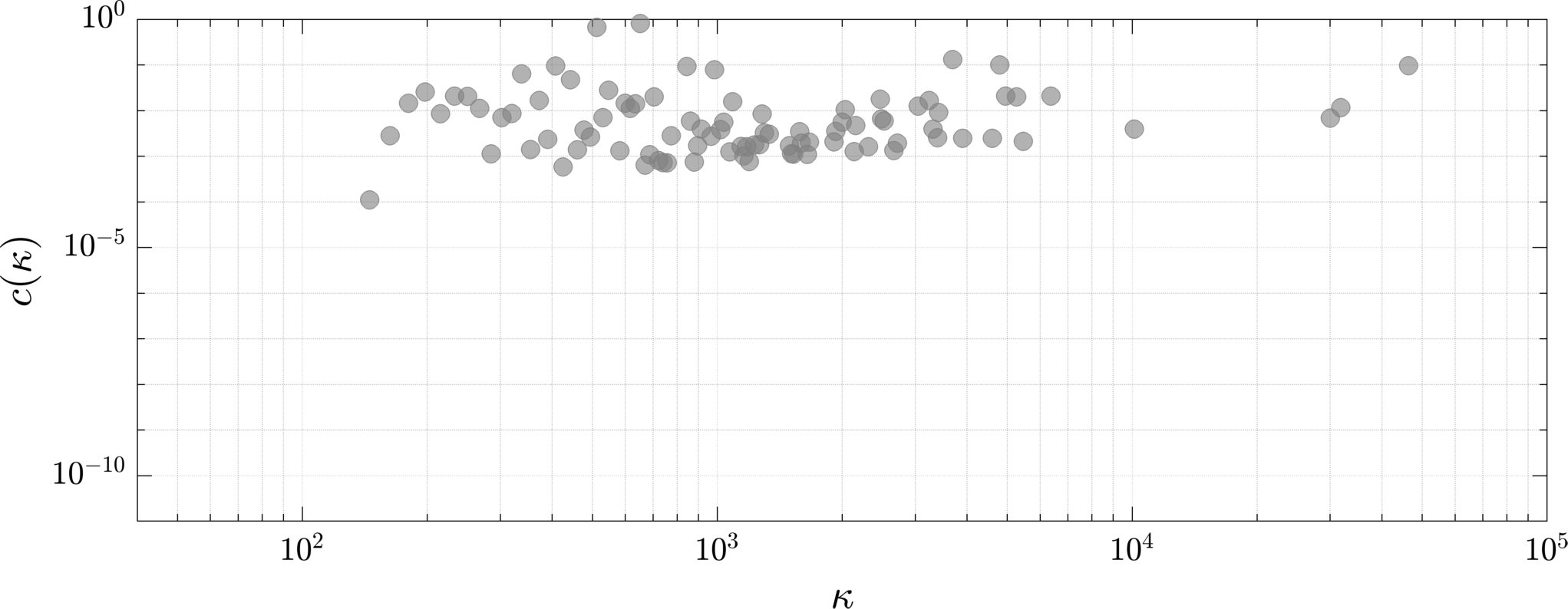}
		\caption{destination clustering spectrum}
		\label{fig_clustersimul}
	\end{subfigure}
	\caption{ The estimate of assortativity and transitivity of a latent variable network 	with the same structure as in Fig.\ref{fig_3frames} with the additional specification of $\phi(y)\sim \phi_0 y^{-2}$. The average-nearest-neighbor's strength function (a) which show a neutral assortativity in the network, the dashed line represents the assortativity mean value. The local clustering coefficient for different values of $\kappa$ in (b), so that the transitivity is constant, the dashed line represent the average global cluster coefficient.}
\end{figure}
 Let us notice the the clustering coefficient is normalized respect to time $t$ since $c(\kappa)\in [0,1]$ and so it results to be the generalization for directed multigraphs without correlations as for simple graphs in \cite{servedio2004vertex,boguna2003class,van2017local}.	Simulations for such results are shown in Fig.\ref{fig_correlationtime} where several computational simulations of a graph for different time length $t$ is presented alongside the prediction results for uncorrelated networks for assortativity and clustering coefficient.   It is worth noticing that the analytical prediction are asymptotically valid so that no isolated nodes or leafs exists since local neighborhood clustering is typically not defined if a node has one  or no neighbor and such situation  influences the estimation of the global clustering in sparse networks \cite{kaiser2008mean}. In the present work, the clustering coefficient algorithm removes all the local clustering of all the nodes with less than 2 neighbors, so the global clustering coefficient is over-estimated\footnote{Another choice would be to set to zero the local clustering coefficient for all nodes with less than two neighbors. Is such case the global clustering coefficient would be under-estimated}. 	
\begin{figure}[!ht]
	\centering
	\begin{subfigure}[l]{0.95\textwidth}
		\centering
		\includegraphics[width=0.75\linewidth]{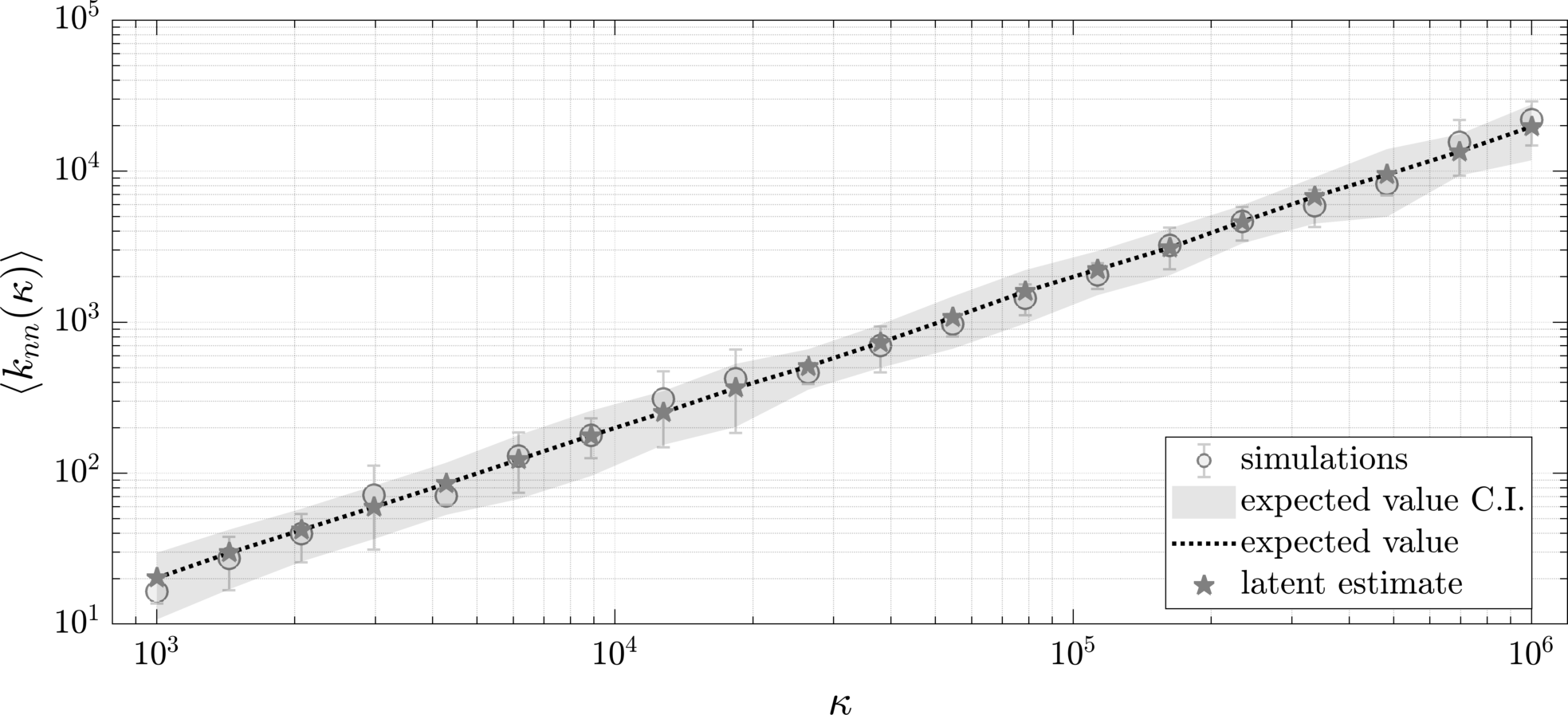}
		\caption{mean in-assortativity for different simulation times}
		\label{fig_ANNDsimultime}
	\end{subfigure}
	\\
	\vspace{0.5cm}
	\begin{subfigure}[r]{0.95\textwidth}
		\centering
		\includegraphics[width=0.75\linewidth]{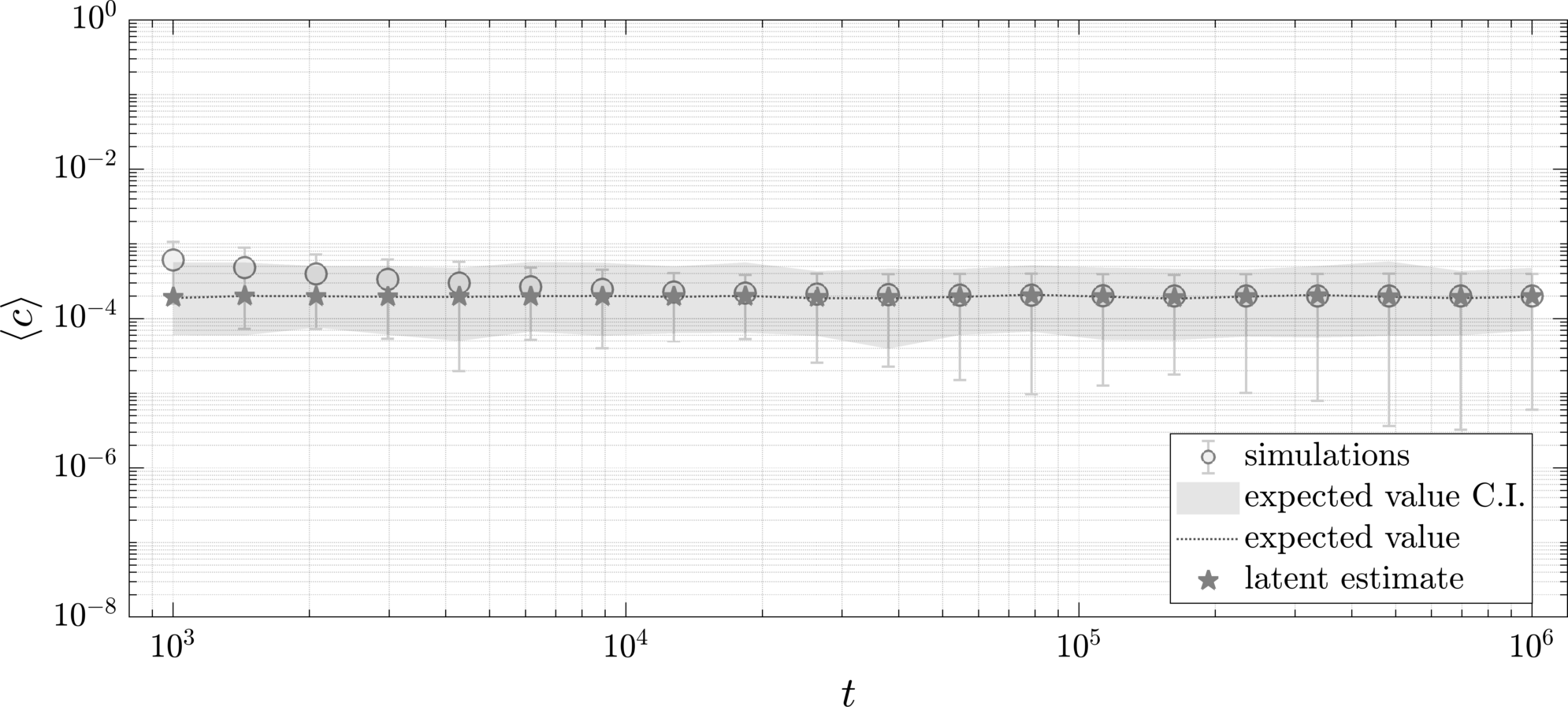}
		\caption{mean in-clustering coefficient for different simulation times}
		\label{fig_clustersimultime}
	\end{subfigure}
	\caption{Two points and three point network correlations using three different approaches, in the case of uncorrelated graph  as reported in Remark \ref{coro2}. The overall average strength of nearest neighbors $\langle k_{nn}(\kappa)\rangle_{\kappa}$ is replicated  for each $t$ so to obtain a mean global value $\langle k_{nn} \rangle$ over a ensemble of $S=50$ replications as in (a). Similarly one can obtain the global mean in-clustering coefficient $\langle c \rangle$ in (b).}\label{fig_correlationtime}
\end{figure}

	
\section{Results}	\label{sec_results}
In the present section a network analysis of the main graph measures and topology will be conveyed in the particular case study of New York metropolitan area by using Safegraph Mobility Dataset \cite{safegraph}	for the year 2019. 
\subsection{Data}	
Origin-destination (OD) data  represent movement flows through geographic space, from an origin (O) to a destination (D). OD datasets represent information on trips between two geographic areas  often represented by the geographical centroids of the areas. Typically encoded with a square symmetric matrix, OD flow data contain numerical data on the aggregate quantity of individuals travelling from one geographic area to another  over a specific time period. Mostly used in
transportation planning, OD flows are an invaluable source of data for understanding spatial and temporal patterns of urban mobility and dynamics \cite{martin2018origin,rodrigue2020geography,batty2013new,bettencourt2021introduction}.
Visit flows can be in practice estimated in various ways in real world data. In particular, mobile phone location data are provided by SafeGraph trough dynamic population Origin-Destination flow matrices with hourly temporal resolution and   aggregated by census block groups (CBG) in the USA as discussed in \cite{kang2020multiscale}.  In the daily CBG to CBG visitor flows metric, each row contains an origin CBG and a destination CBG, as well as the number of mobile phone-based visitor flows from the origin CBG to the destination CBG. Every day, the number of unique mobile
phone users who live in the origin CBG and visits to the destination CBG are recorded.
As regarding with visit production  model, the population in each block is the key information to obtain from data in order to define the variable $y$ and its respective distribution. However the population data is susceptible to the way data are collected and sampled by the provider. In fact the demographic sampling  depends on many factors as the geographical boundaries which define a block, a tract, or any administrative tessellation. Moreover in the statistical sampling methodology  the individual measurements in each block go through a few transformations and aggregations  which impacts the final measurement \cite{scheaffer2011elementary}.

From SafeGraph data it is possible to build a matrix of trip flows between locations in a day for arbitrary large region $R$ of the US. In particular, a county is considered with a tessellation at the resolution of census block group level, then it is possible to reconstruct the adjacency matrix of directed trips from an origin location towards a destination as stops  of devices as described in \cite{safegraph}.  Fig.\ref{fig_cityNY}  plots a sequence of visitation counts in different census block group areas during different time windows of a day for New York city.

\begin{figure}[!ht]
	\centering
	\begin{subfigure}[l]{0.32\textwidth}
		\centering
		\includegraphics[width=0.95\linewidth,trim={0 0 7cm 0},clip]{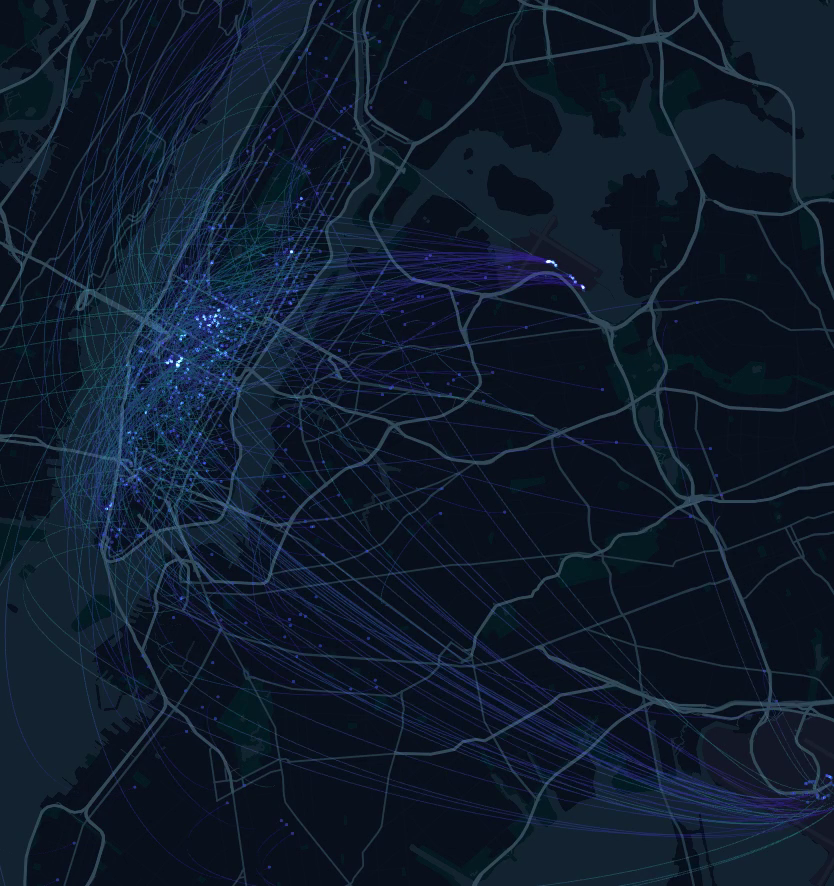}
		\caption{5am-6am}
	\end{subfigure}
	\begin{subfigure}[l]{0.32\textwidth}
		\centering
		\includegraphics[width=0.95\linewidth,trim={0 0 7cm 0},clip]{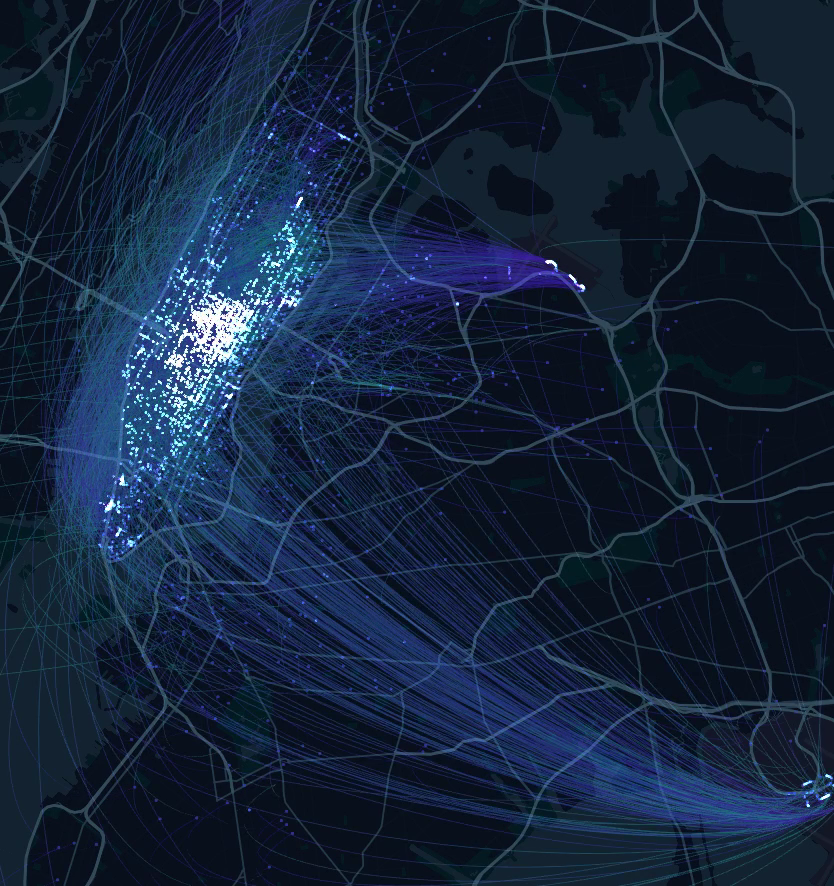}
		\caption{5am-9am}
	\end{subfigure}
	\begin{subfigure}[l]{0.32\textwidth}
		\centering
		\includegraphics[width=0.95\linewidth,trim={0 0 7cm 0},clip]{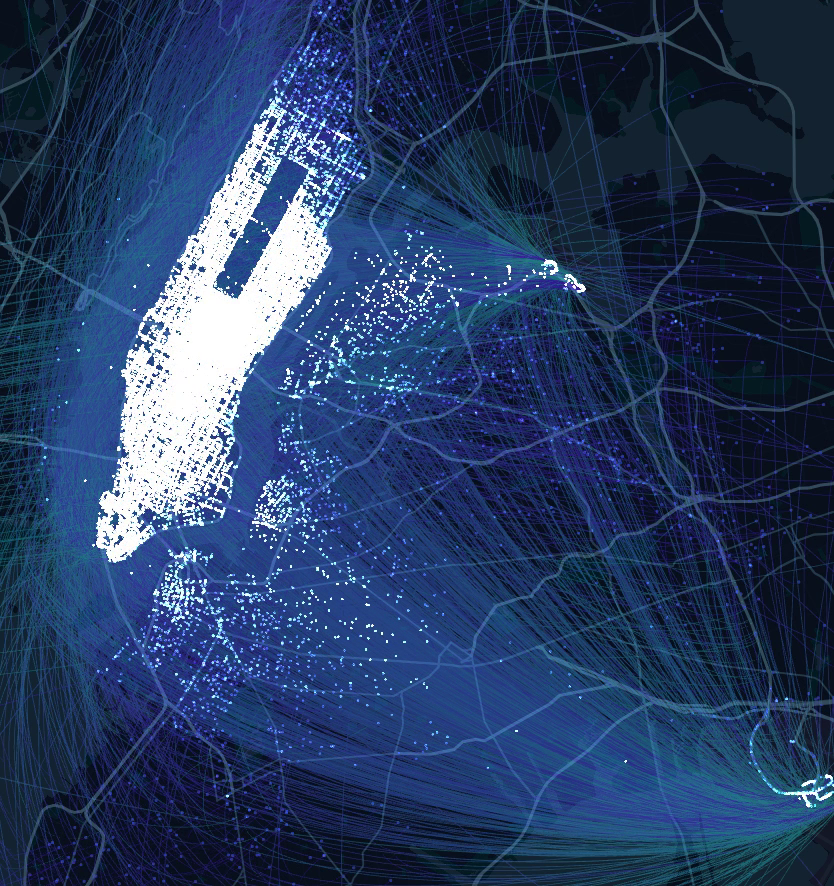}
		\caption{5am-12pm}
	\end{subfigure}
	\caption{Number of cumulative visits  in New York city \cite{keplerNY} for different time windows.}
	\label{fig_cityNY}
\end{figure}

\begin{table}[ht]
	\centering
	\medskip
	${\huge \boldsymbol{A}} = \left[
	\begin{array}{l|cccc|c}
	&     & \boldsymbol{O}    &    &     & {\footnotesize \rotatebox{360}{$W_O$ }}\\ \hline
	& a_{11} & a_{12} & \dots  & a_{1N} & w_1 \\
	\rotatebox{90}{$\boldsymbol{D}$}  & \vdots & \vdots & \ddots & \vdots & \vdots\\
	& a_{N1} & a_{N2} & \dots  & a_{NN}& w_N\\
	\hline
	{\footnotesize \rotatebox{90}{$W_D$ }}	 & w^{1} & w^{2} & \dots  & w^{N}&    v_W
	\end{array}
	\right]= \left(
	\begin{array}{c|c}
	\raisebox{-15pt}{\huge\mbox{ { $\boldsymbol{A}_0$ }  } }  & w_1 \\[-4ex]
	&\vdots  \\[-0.5ex]
	& w_N \\ [0.5ex] \hline \\[-2ex]
	w^1 \cdots w^{N}	 &  v_w
	\end{array}
	\right)$
	\caption{Data matrix format. The vector $\boldsymbol{D}$ represents the array of locations as destination units in the tessellation region. Similarly  $\boldsymbol{O}$ represents the same array locations but as origins of trips inside the tessellation region.  The array $W_D$ is the set of destinations located  outside the region and $W_O$ is the set of all the origin locations outside the region. So $\boldsymbol{A}_0$ is the open O-D table, and $\boldsymbol{A}$ is the close O-D table.  }\label{tab_SFmatrix}
\end{table}	

The data has been re-organized as shown in Table \ref{tab_SFmatrix} where the $(N+1)\times (N+1)$ matrix $\boldsymbol{A}$ indicates the global origin-destination table visiting flow between the $N$ blocks of the region $R$ plus one external node which represents the resto of the world available in the data outseide the region of interes $R$. In particular, $a_{ij}$ is the number of visits registered as stop in the destination $j$ in $R$ that originated in the location $i$ in $R$. For locations outside the selected region, the entry $w_i$ counts the visits in the destination $i$ of the region $R$ that originated from a location outside the region $R$. The in-degree for the destination location $i$ is the sum along the columns of the row $i$ from the matrix $\boldsymbol{A}$ from which the empirical visit distribution is evaluated. In addition, the  trip-visit distribution, aka the in-strength distribution, is evaluated by associating a weight to each visit. As a typical choice, the weight is taken to be the distance between the origin block and the destination one in kilometers as estimate of the distance from home travelled by devices. Such information is recovered by the census bureau geographical data using Census Block Group geometries with longitude and latitude coordinates of the block centroid \cite{censustiger,safegraph}, calculated as the haversine distance between the visitor's home geohash-7 and the destination location geohash-7 for each visit. A more detailed  estimate would be the effective distance traveled by each visitor in the trip between its main location to the selected destination. Such information is not reported in SafeGraph at the moment neither in other data sources consistent with the data structure in the study. However,  the radial movement approximation is motivated by the fact that travelers typically seek the shortest route \cite{universalschlapfer2021,batty2013new}.

\subsection{Mobility network analysis}
Let us start with the estimate of the distribution of visits among different destinations, which namely represent the in-strength distribution so that the in-strength of destinations is  $\kappa_i=\sum_j C_{ij}A_{ij}$ where $A_{ij}$ is the entry of the adjacency matrix indicating the number of arrivals in destination $i$ of a trip originated in the location $j$. Whereas $C_{ij}$ is the visit "size" of the traveler who departed from origin $j$ and has arrived to destination $i$. Such value is taken from  weight matrix $C$ that represents, in this particular case, the distances between origin-destination pairs. Such cost matrices are directly recovered from census data included in the dataset used.  In Fig.\ref{fig_NYstat}(a) the empirical complementary cumulative distribution function is plotted for the case of New York metropolitan area in a typical day of November 2019. The inspection of in-strength distribution shows that the visit distribution has a scale-free asymptotic behavior as $P(\kappa)\sim \kappa^{-\mu}$ with power law coefficient of $\mu\approx 1.8$.
Let us now discuss the degree correlations such as assortativity and clustering just in the case of the origin-destination matrix.
As already discussed, the average out-strength of neighbors of destinations of in-strength $\kappa$ measures the tendency of having a directed trips from an origin location to a destination, defined as in \cite{vanni2021incremental,architectPNAS}, and from here the assortativity spectrum can be built. As plotted in Fig.\ref{fig_NYstat}(b), the average nearest neighbor in-strength function  is flat, and this shows a neural assortativity behavior with a mean value of $\langle k_{nn}(\kappa) \rangle_{\kappa} \approx 15.4 $. Such estimate is in agreement with the analytical prediction of the  expected average in-strength of the nearest neighbor for uncorrelated networks  $ \mathbb{E}[k^{(u)}_{nn} ] = \langle \kappa_{out} ^2\rangle /\langle \kappa \rangle=15.7$ as proposed in Remark \ref{coro2} . Under the same conditions, the average in-clustering coefficient can be computed accordingly to the definition \cite{clemente2018directed,clementecode} adapted to the in-clustering coefficient defined in the present work. 
 The clustering spectrum for the data is shown in Fig.\ref{fig_NYstat}(c) where the clustering spectrum is flat as for uncorrelated-graphs, and the global clustering coefficient is given by $\langle c_{in}(\kappa)\rangle \approx 0.02 $ consistently with the analytical prediction of $\mathbb{E}[c^{(u)}]$ for uncorrelated networks as reported in Remark \ref{coro2}. This shows that the O-D SafeGraph mobility network is consistent with the hypothesis of uncorrelated graph with scale-free visit distribution at a macroscopic scale as also discussed in \cite{bettencourt2021introduction}. The absence of degree-correlations allows to consider the origin-destination conditional probability to be $\chi (y|x)=\chi(y)$. This means that a destination receives a visit from a randomly chosen origin location, without any particular choice of the origin but the number of resident populations in it. Consequently, the correlations between origins and destinations are entirely due to trip costs represented in the weight matrix is a well-mixed locations at level of attractiveness property  as specified in the model assumptions.
 
 The scale free behavior of the visit distribution together with the neutral tendency of degree correlations, allows us to narrow the type of kernel function to be considered in the model.  Despite that, we do not have enough information to uniquely determine the attraction rate and the latent variable distribution. We will face such issue in the nex section when we will discuss possible proxies of attractiveness latent-variable. 
\begin{figure}[!h]
	\begin{minipage}{.5\linewidth}
		\centering
		\includegraphics[width=.82\linewidth]{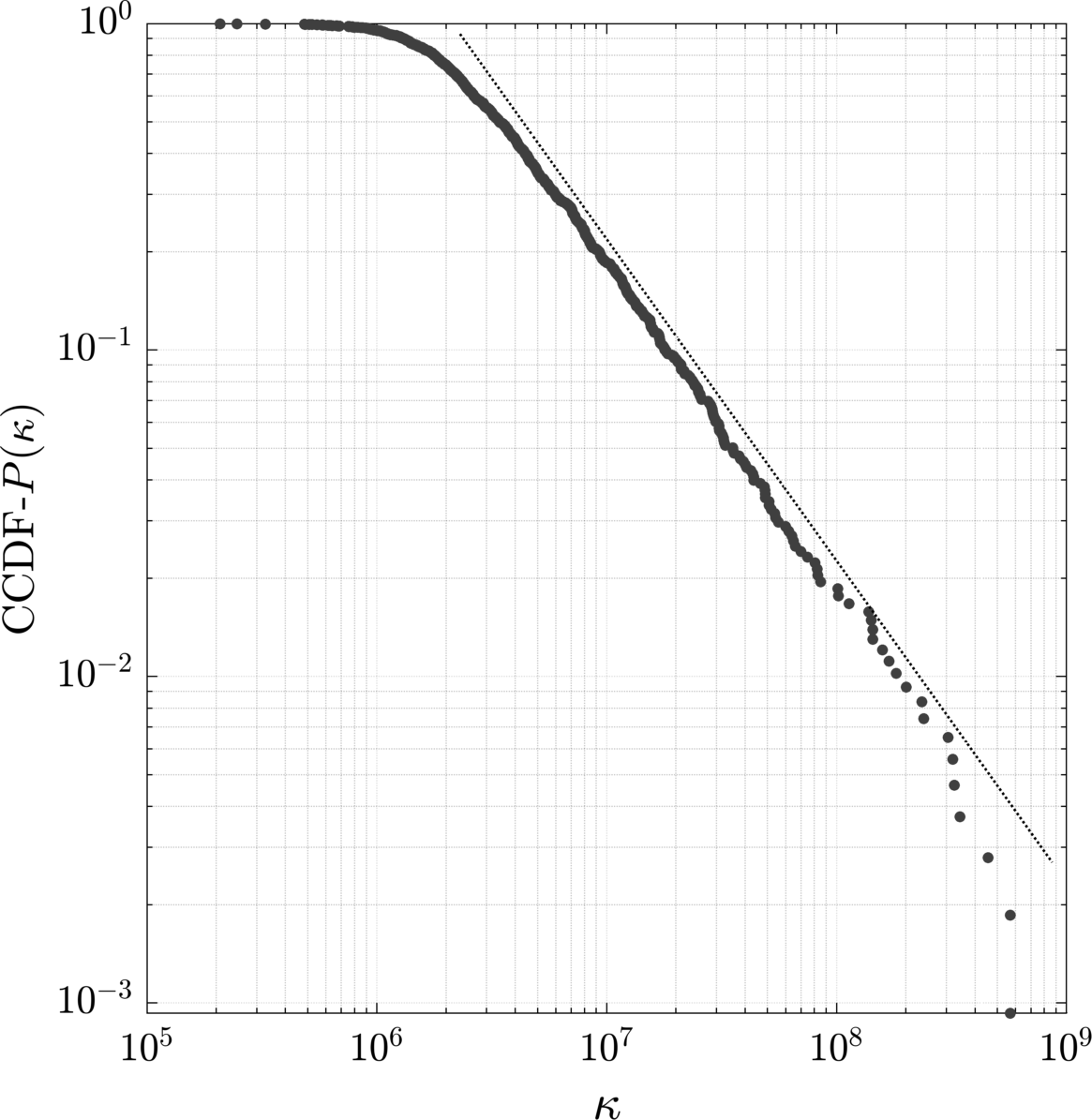}
		
		\small (a)
		
	\end{minipage}%
	\begin{minipage}{.5\linewidth}
		\makebox[.85\linewidth]{\includegraphics[width=1\linewidth]{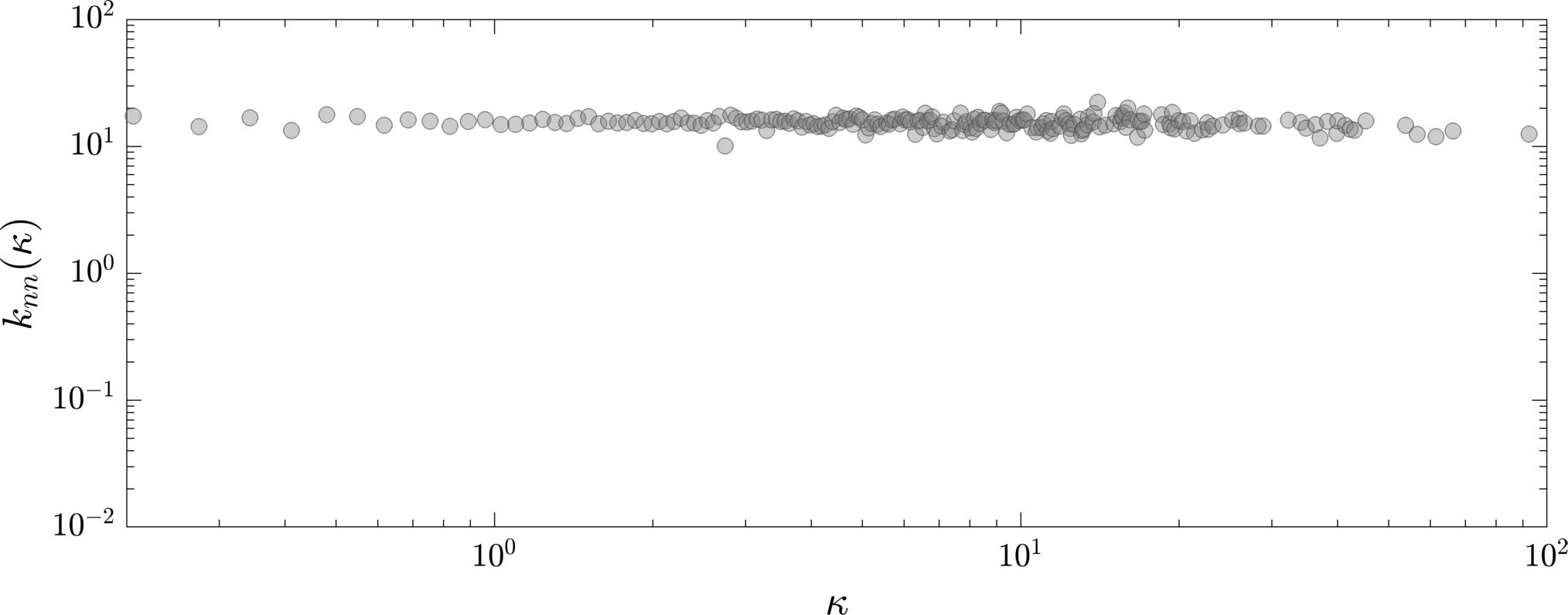}}\\
		\makebox[.85\linewidth]{\small (b)}%
		
		\medskip
		
		\makebox[.85\linewidth]{\includegraphics[width=1\linewidth]{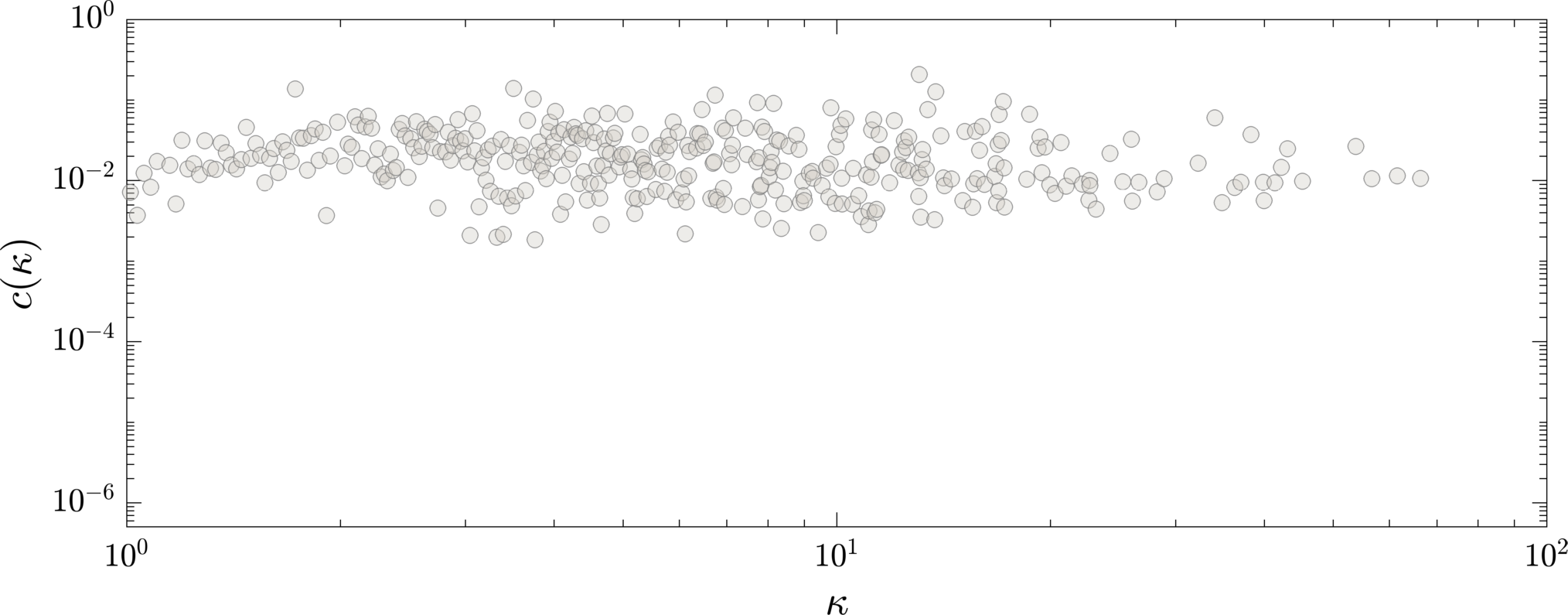}}\\	
		\makebox[.85\linewidth]{\small (c)}%
	\end{minipage}
	\caption{Visit distribution and correlations of the Safegraph Neighbor Patterns mobility network for New York city in November 2019. The complementary cumulative in-strength distribution function is plotted in (a) of the trip-visit probability density function with a clear scale-free behavior. As for the trip-visit correlations, the network shows a neutral assortativity (b) and a neutral clustering (c) computed on the normalized version of the OD adjacency matrix.}
	\label{fig_NYstat}
\end{figure}

\subsection{Network scaling topology}
A very important analysis of the mobility network, is to study how the trip costs affect the topology of network.
The number of trip arrivals at the $i$-th destination can be written as the in-degree $k_i=\sum_j A_{ij}$, meanwhile the visit strength of the $i$-th destination can be written as $\kappa_i=\sum_j C_{ij}A_{ij}$ where $C_{ij}$ is the entry of the Origin-Destination distance matrix $C$. The  in-degree and in-strength distribution have been plotted in Fig.\ref{fig_NYdegree} and Fig.\ref{fig_NYstrength} respectively, with a clear scale-free asymptotic behavior but with different power law coefficients. Such evidence suggests a very interesting aspect of visiting patterns where  trip cost weights have a significant effect of  on the mobility network structure. In particular, the relation between strength and degree of a location node can be written  the average strength of destinations with degree $k$ changes as:
\begin{equation}\label{eq_degreestrength}
\kappa (k)\sim k^{1+\delta}
\end{equation}
where the exponent $\delta$ represents the rescaling factor and $\delta=0$ occurs in the absence of correlations between the weight of links and the degree of nodes \cite{architectPNAS} so that  the strength of a node is simply proportional to its degree and the two quantities provide therefore the same information on the system. The action of some correlation in the weight can bring cases where $\delta\neq 0$. 
In such situation such relation induces a change in the scaling of the degree distribution (i.e. visit distribution) $P(k)\sim k^{-\mu_0}$ and the strength distribution (i.e. trip-visit distribution) $P(\kappa)\sim \kappa ^{-\mu}$ according to the relation:
\begin{equation}\label{eq_changescale}
\mu=\frac{\mu_0+\delta}{1+\delta}
\end{equation}
and in the case of $\delta=-1$, $P(\kappa)$ is a Delta distribution of constant strength. In the case of the data under study, there is a clear linear relation between strength and degree as in Fig.\ref{fig_NYdegreevs}, consequently the strength distribution shows a scale free coefficient $P(\kappa)\sim \kappa^{-\mu}$ different from the one in the degree distribution $P(k)\sim k^{-\mu_0}$ consistently to the trasformation in eq.\eqref{eq_changescale}.
\begin{figure}[!ht]
	\centering
	\begin{subfigure}[l]{0.4\textwidth}
		\centering
		\includegraphics[width=0.9\linewidth]{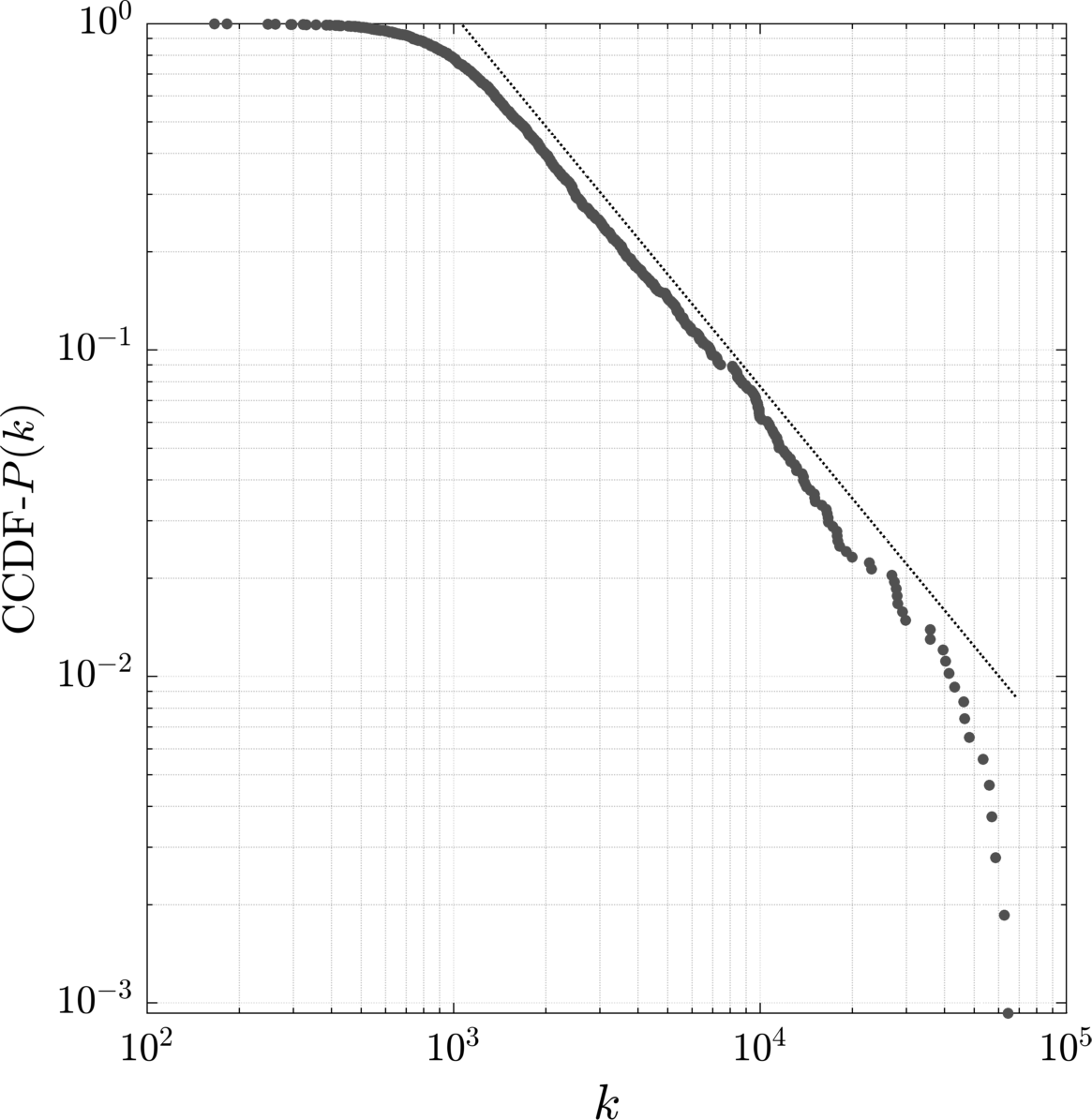}
		\caption{Trip arrival distributon (in-degree)}
		\label{fig_NYdegree}
	\end{subfigure}
	\begin{subfigure}[r]{0.4\textwidth}
		\centering
		\includegraphics[width=0.9\linewidth]{img/NYtripvisit_Ps.png}
		\caption{Visit distribution (in-strength)}
		\label{fig_NYstrength}
	\end{subfigure}
	\\
	\vspace{1cm}
	\begin{subfigure}[r]{0.75\textwidth}
		\centering
		\includegraphics[width=1\linewidth]{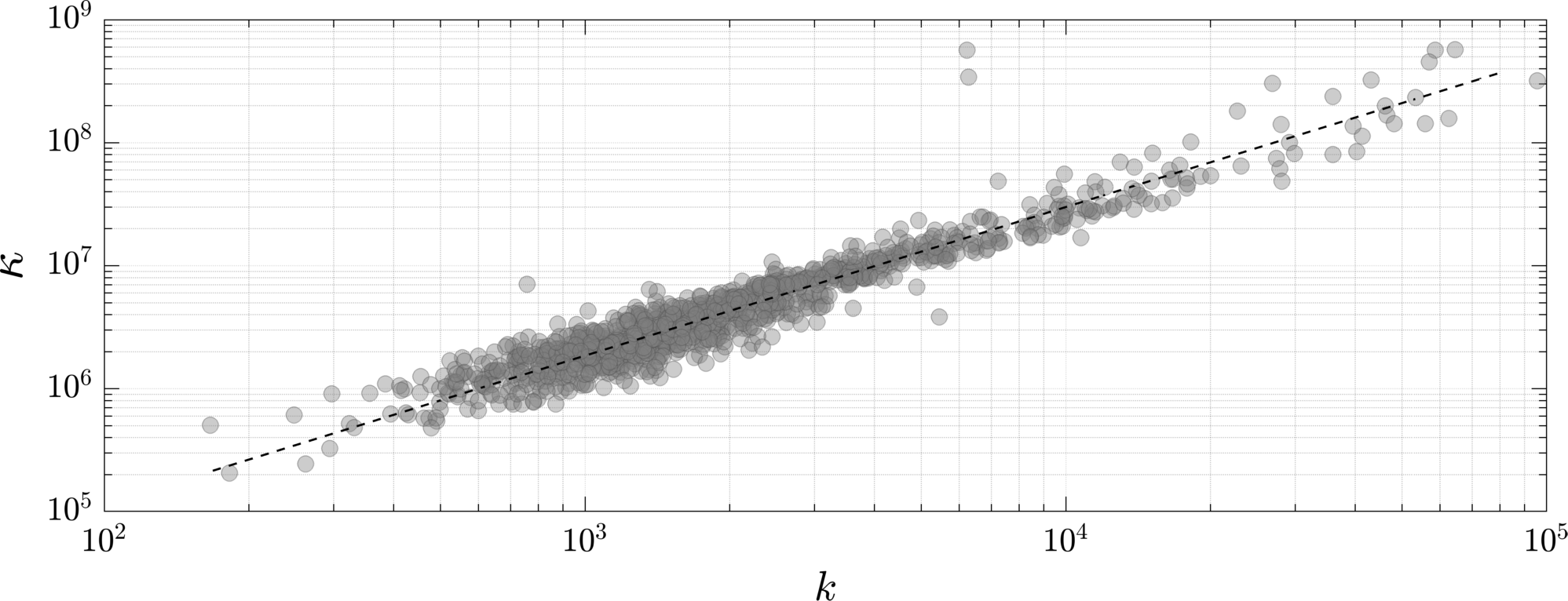}
		\caption{Degree vs Strength}
		\label{fig_NYdegreevs}
	\end{subfigure}	
	\caption{Visit patterns for in-strength defined in terms of distance traveled in trips. The linear fitting of degree-strength scatter plot provide a significant scaling coefficient as $\kappa\sim k^{1+0.53}$, so that the visit distribution asymptotic behavior is $P(k)\sim k^{-2.2}$ meanwhile the trip-visit distribution by distance weights is  $P(\kappa)\sim \kappa^{-1.8}$. }
\end{figure}\label{fig_strengthvsdegree}

The value of the rescaling exponent $\delta$ has been estimated through a regression analysis which also allows to perform the significance level for a linear relation between degree and distance-strength. As regarding with the data used in this study, the weight of a links correspond to the physical distance between the origin and the destination of the trip which defines the trip-visit distribution as indicated in Fig.\ref{fig_strengthvsdegree}. As reported in the caption, there is a neat change of the slope in the scale-free distribution, the visit-distribution based on the degree fig.\ref{fig_NYdegree} shows a slower power law coefficient respect to trip-visit distribution based on node strengths Fig.\ref{fig_NYstrength}. This is due to the linear relation between the  degree and its weighted version trough the origin-destination distances. Let us notice that the weights have an impact on the degree which is significant in determining a change in the scaling relation of the distribution. Such result of $\delta>0$ suggests that the strength of nodes grows faster than their degree, in other words, the trip distances associated to highly visited locations have higher values than those expected if the trip distances were assigned at random. Such tendency denotes a strong correlation between the trip weight and the topological properties in the mobility network, where the higher the number of visits in a location, the more traffic the location can handle. As a conclusion, a general scheme arises where the number of visitors to any destinations decreases as the inverse power law of the product of their visiting frequency and travel distance, as already suggested by \cite{universalschlapfer2021}.
On the contrary, random weights would have produced a $\delta$ close to zero, which occurs in the case when link weights are independent from the network topology, so that the strength distribution would carry no information than the degree distribution. For a more detailed estimate of the rescaling exponent $\delta$, a regression analysis is reported in table Tab.\ref{tab_regression}, for different types of trip weights. 
\begin{table}[ht] 
	\centering 
	\begin{threeparttable}
		{\renewcommand{\arraystretch}{1.15}%
			\begin{tabular}{c||ccc}
				\toprule 
				& {Distance}  & Travel time & Income\\
				\midrule    
				\\	[-2.5ex]
				\multirow{2}{*}{\textsl{$\delta$}}& 0.531$^{*}$ & 0.033$^{*}$ & 0.017$^{*}$ \\ 
				& {\small $[ 0.467,  0.595]$}  & {\small $[ 0.025  ,  0.042]$}  & {\small $[ 0.010  , 0.024]$}\\ [1ex]
				\midrule[0.015cm]
				\multirow{2}{*}{\textsl{$\alpha_0$}}& 0.842$^{*}$ & 0.842$^{*}$ &  0.842$^{*}$ \\ 
				& {\small $[0.790,0.894]$}  & {\small  $[0.790,0.894]$ } & {\small  $[0.790,0.894]$ }\\ [2ex]
		\multirow{2}{*}{\textsl{$\alpha$}}& 1.322$^{*}$  & 0.874$^{*}$ & 0.858$^{*}$ \\ 
	& {\small $[1.223,1.421]$}   & {\small  $[0.822,0.932]$ } & {\small  $[0.805,0.912]$ }\\ 
				\bottomrule 
		\end{tabular} }
		\begin{tablenotes} 
			\scriptsize 
			\item \leavevmode\kern-\scriptspace\kern-\labelsep Note for the linear fit the p-value of the linear regression:$^{*}$p$<$0.01.  The confidence intervals have been calculated at a significance level of $99\%$.
		\end{tablenotes}
	\end{threeparttable} 
	\caption{In-degree vs in-strength regression analysis for different trip-weights. The slope of  linear fit reported is the coefficient  $1+\delta$.} 
	\label{tab_regression}
\end{table} 
As a additional study to  the linear regression statistics, performing a residual analysis  makes it possible to test the assumption of a linear regression model such as the errors are independent and normally distributed, as shown in the Supporting Materials (SM3).   


\section{Economic applications: from land-use to travel demand}\label{sec_empirical}
In this section, the investigation on possible properties of latent variables will be useful to select which kernel function is suitable to meet the trip mobility network characteristics. Later, the latent variable will be proved to be a crucial key that justify the income elasticity  of a multi-purpose travel demand for any transportation mode. 
\subsection{The latent variable interpretation}
The model dynamic is driven by the presence of latent-variables that  are related to intrinsic properties of the areas but they cannot be not directly given (as for deterministic measurements) but they can only be inferred through statistical indicators (probabilistic measurement or proxy). 
At this point, it is possible to formulate the observed mobility network in terms of the latent-variable model so that the scale-free distribution shown in the real-world trip mobility network can be stated in terms of the latent variable statistical attributes.  In particular,  the attractiveness variable $x$ and the productiveness variable $y$ have been considered as hidden variables in the visitation  model. The first summarizes the notion of the ability of a destination to attract visitors, the latter describes the ability of an origin location to produce and generate travelers with specific characteristics.
However, latent variables can be interpreted more like proxies or indexes rather than proper measurements of observed phenomena. Under such perspective, there are several studies which  try to estimate the number of visits by a combination of urban features, such as job opportunities, retail shops, business activities, infrastructural capacity, geographical positions etc...
The standard approach to categorize urban areas classifies regions by their physical features and land use which refers to the way in which land is utilized, developed, and transformed for different purposes such as residential, commercial, industrial, and agricultural purposes.  Urban morphology is seen as the result of dynamic interactions between multiple factors, such as transportation efficiency, population size, and local land use. In particular, regional movement patterns, and consequently travelers' distribution, can be explained from land use, since purposes of people’s trips are strongly correlated with the land use of the trip’s origin and destination \cite{lee2015relating,shi2020prediction,akiba2022correlation,rodrigue2020geography}.  
For example, the latent-variable framework can provide an interesting interpretation of the effect of trip distances on the visit distribution, which can be formalized in terms of the attractiveness latent variable model. 
It is out of scope of this work to detect the best combination of factors which define the attractiveness of locations, however, how supported by some studies \cite{greenwald2006relationship,wagner2007urban,acheampong2015land}, let us take the non-residential land-use of census blocks to be the primary cause of travel demand and so it could be consider a proxy of destination attractiveness $x$ in a multi-purpose travel model. In such perspective, the data from the New York Open Data \cite{pluto} has been used where the land use zones is reported as the square feet occupied by building, parks and areas with a given use of destination (except residential), see the Supplementary Materials for more details on data used and correspondent interpretations. Land-use can be  seen as a "ceteris paribus" candidate for attractiveness of locations and in Fig.\ref{fig_landuse}, the plot shows the probability density function of square feet of land-use lots and it shows a scale-free behavior with a power law exponent of $2$ as plotted in Fig.\ref{fig_landuse}. 
\begin{figure}[!ht]
	\centering
	\begin{subfigure}[l]{0.95\textwidth}
		\centering
		\includegraphics[width=0.75\linewidth]{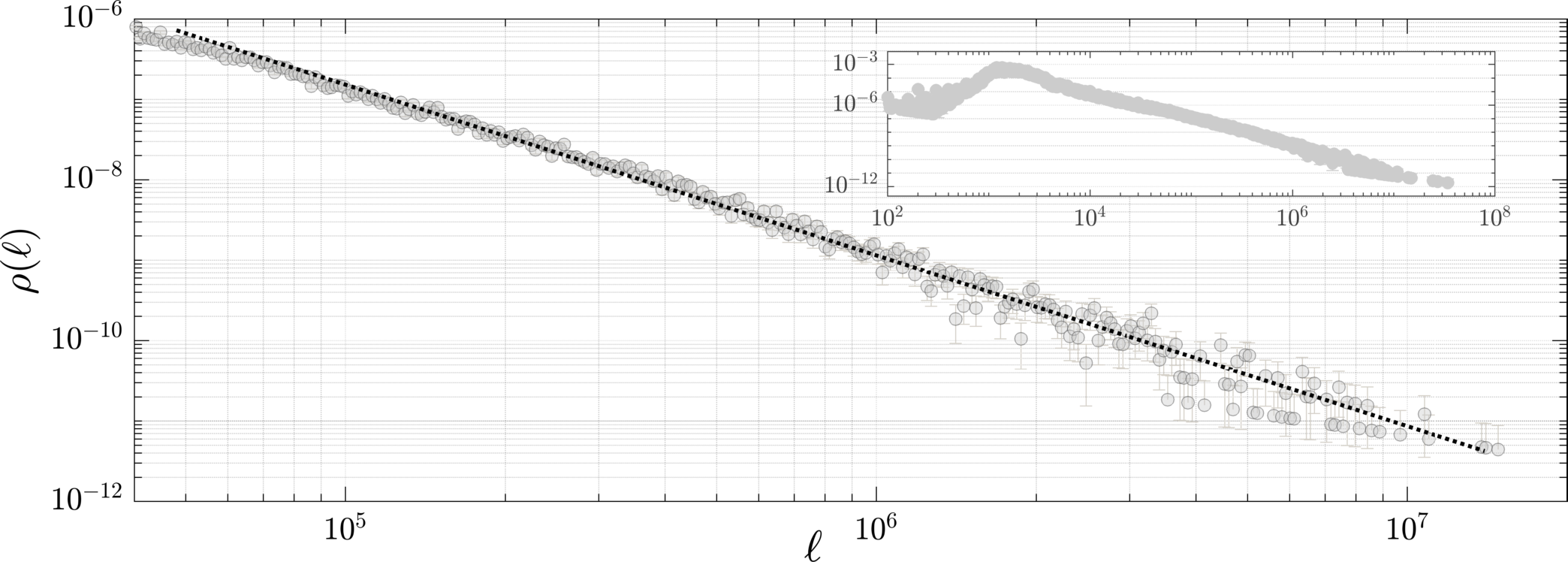}
		\caption{Tax lot square feet for non-residential land use}	\label{fig_landuse}
	\end{subfigure}
\\
	\vspace{20pt}
		\centering
	\begin{subfigure}[r]{0.95\textwidth}
		\centering
		\includegraphics[width=0.75\linewidth]{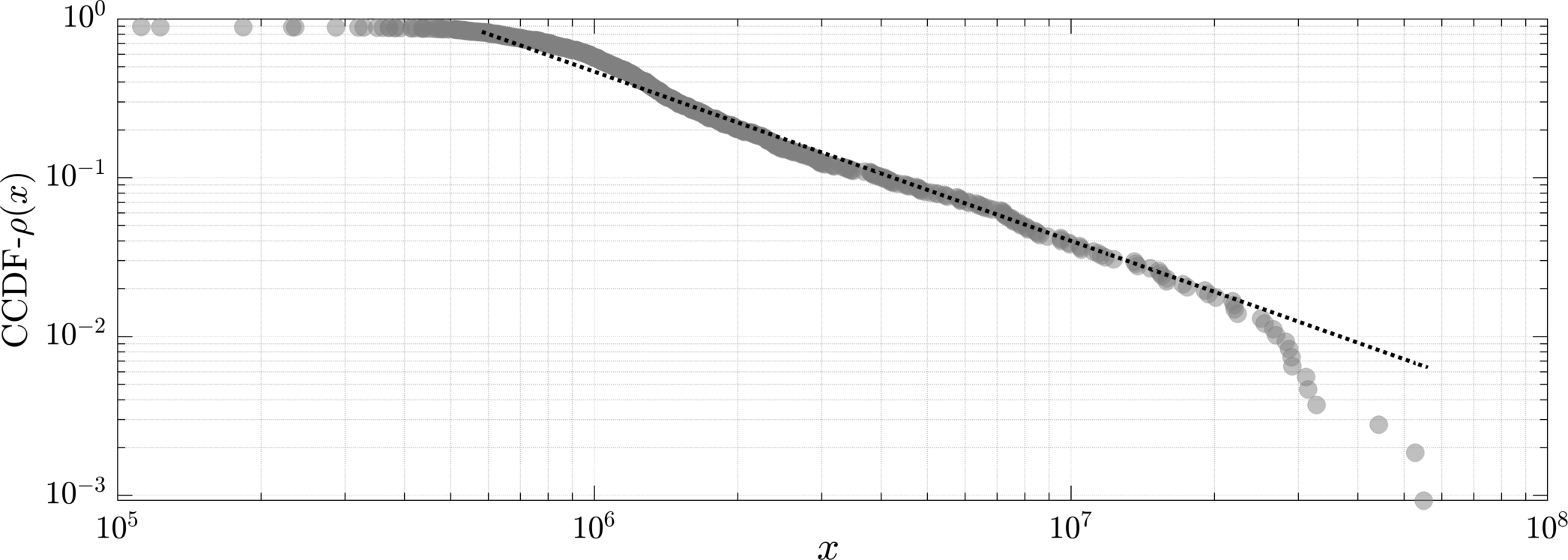}
		\caption{Tax lot square feet aggregated in census block groups}\label{fig_landusedensity}
	\end{subfigure}%
	\caption{Log-binning procedure for the probability density function of square feet for land-use zones as hypothetical index for destination attractiveness. It shows an inverse power law fat-tail distribution so that the asymptotic behavior the probability density function can be written as $\rho(x)\sim x^{-2}$. In the inset the whole distribution is plotted where the data is fully represented even at a low spatial scale.  In (b) the complementary cumulative density function of the land-use square feet aggregated for census block from tax lot data.}
\end{figure}
The same analysis is performed after aggregating tax lots into census block groups,  the land-use areas for non-residential purposes keep the same asymptotic fat-tail distribution with an inverse power law probability density function   $\rho(x)\sim x^{-\eta}$ with $\eta \approx 2$ as plotted in Fig.\ref{fig_landusedensity}.

\begin{figure}[!ht]
	\centering
	\begin{subfigure}[c]{0.95\textwidth}
		\centering
		\includegraphics[width=0.76\linewidth]{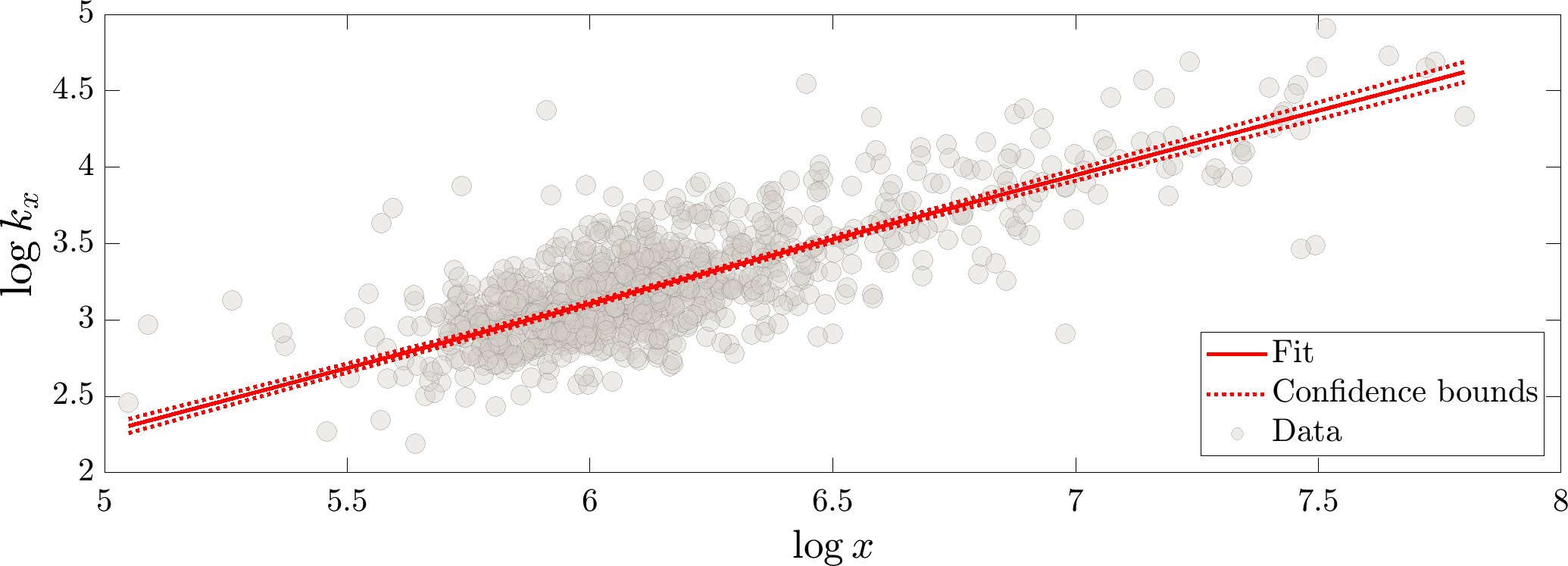}
		\caption{Scatter log-log plot of the destination with land-use $x$ versus the number of arrivals  $k_x $ to reach that destination. The slope is $\alpha_0\approx 0.84$. }\label{fig_landusedistance}
	\end{subfigure}%
	\\
	\vspace{20pt}
	\centering
	\begin{subfigure}[c]{0.95\textwidth}
		\centering
		\includegraphics[width=0.75\linewidth]{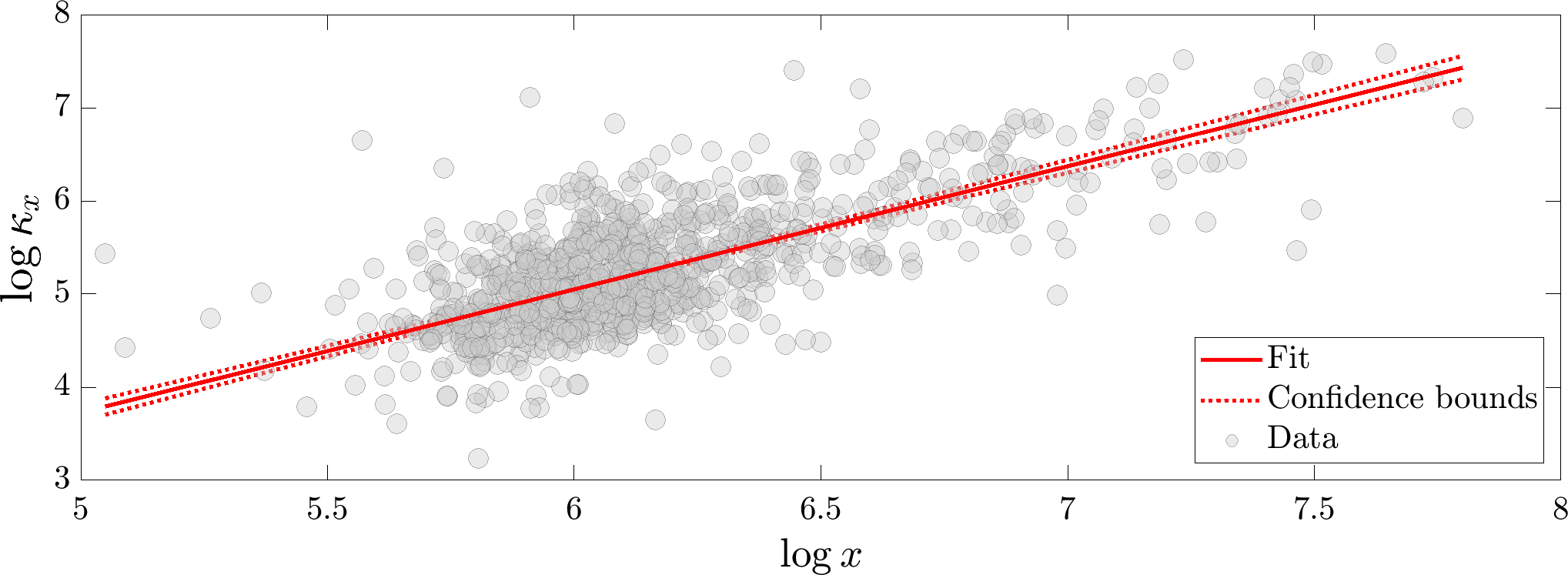}
		\caption{Scatter log-log plot of the destination with land-use $x$ versus the number of visits $\kappa_x$ (i.e. number of arrivals weighted by distance traveled)  to reach that destination. The slope is $\alpha\approx 1.32$. }\label{fig_landusetrip}
	\end{subfigure}%
	\caption{Regression fit between the land-use of census block group and (a)  arrivals and (b) visits. }
	\label{fig_landusevisit}
\end{figure}\label{fig_landuse_analyze}
 At this point it is possible to investigate the relation between mobility and land-use data (as attractiveness indicator). First, it is possible to analyze the relation between the number of arrivals $k$ in each destination with the non-residential land-use of that area $x$. Similarly the relation of land-use versus the visits $\kappa$ as weighted arrivals is analyzed as well. So,   let us investigate  a log-log linear regression analysis of such variables. The linear fit analysis of the relation $k_x\sim x^{\alpha_0}$ reveals an estimated value $\alpha_0\approx 0.842$ as shown in Fig.\ref{fig_landusedistance} and linear fit analysis of the relation $\kappa_x\sim x^{\alpha}$ reveals an estimated value $\alpha\approx 1.322$ as shown in Fig.\ref{fig_landusetrip}. In table \ref{tab_regression} such estimations for $\alpha_0$ and $\alpha$ are reported for different types of trip weights. By knowing that $\kappa_x=\mathbb{E}[\kappa|x]\propto \langle \mathcal{r}\rangle_x  \nu_x\sim x^{\theta+\alpha_0}$,  it is possible estimate the scaling of the trip size goes as  $\langle \mathcal{r}\rangle_x \sim x^{\theta}$, as confirmed by a direct  linear fit analysis where $\langle \mathcal{r}\rangle_x \propto x^{\theta}$ with $\theta \approx 0.48$. In such circumstances, by using the latent variable framework one can recover the visit distribution as $P(\kappa) \sim  \kappa^{-(1+\frac{\eta-1}{\alpha_0+\theta})}$, which is the same scale-free distribution directly observed during the analysis of the origin-destination network.
It is worth noticing the relation between the scaling exponents $\theta$ and $\delta$. It can be easily verified that the strength distribution in eq\eqref{eq_changescale} has $\mu= {\frac{\mu_0+\delta}{1+\delta} =1+\frac{\eta-1}{\alpha}}$, where $\mu_0=1+\frac{\eta-1}{\alpha_0}$ and $\alpha=\alpha_0 + \theta$. Solving, we find that, the following relation holds:
\begin{equation}
\delta=\frac{\alpha}{\alpha_0}-1=\frac{\theta}{\alpha_0}
\end{equation}
which can be checked by comparing the values reported in the Table \ref{tab_regression} under their relative error margins.
 In conclusion, the attraction rate of a location is higher than another destination in the sense that the travelers are so motivated to travel a longer distance to get there. Such effect reveals the action of human travel demand on mobility, as formulated in the present paper, in terms of latent variable network model. The scaling relation between land use versus both travel distances and visits are power law like so that the attraction rate must be of a power law type and the distribution of land use that is the attractiveness latent variable has a pareto- type distribution. So no other attraction rate is compatible with the observation. Morever since degree correlation are absent, it is plausible that $\chi(y|x)=\chi(y)$ so that the kernel function $\mathcal{K}(x,y)$ is multiplicative separable function. 


\subsection{Income elasticity}
The visitation model approach can provide economical interpretations of some empirical  urban scaling evidences \cite{bettencourt2021introduction}, where scaling laws are also present in economical values of each location respect to its attractiveness. A crucial attention will be focused on the {income elasticity of visit demand}. 
	Let us, first, define the "benefit" that travelers received by visiting a location, and in particular, the variable  $I_x$  indicates the income level associated to the visitors who have traveled towards a given location with attractiveness $x$ for a job purposes. In this way the benefit can be determined trough the strength-by-income variable $\kappa_i$ which  converts a visit into potential economic output through a conversion factor $i_0$ that is the traveler's income per unit of visit time \footnote{The factor $i_0$ is an average conversion which attributes a benefit-value to each trip, but this value depends on the  distribution among possible trip purposes (as business, consumption, leisure, ...) for each traveler by their origin locations. If origins and destinations are, again, randomly independent $i_0$ can be considered a mean-field constant correction factor. Further it can change over the observation time during which data have been collected. Considering $i_0$ to be location independent it can be considered a  proportionality constant.}.
	On the other side, let us define the "cost" $Q_x$ faced by travelers to reach a location as the strength-by-distance variable $\kappa_q$  taking into consideration that  commuting costs are proportional to distance or time traveled, and the proportionality conversion factor $c_0$ is the cost of transportation per unit of quantity traveled\footnote{Quantity can be measures in distance or travel time. A lower $c_0$ correspond to a  better transportation system. Let us observed that in the case of cities the transportation costs and distance are well approximated by a linear relation. However for larger areas, empirical studies \cite{felbermayr2022trade,carra2016modelling}  show that transport cost is an increasing and concave function of distance so that $c(\mathcal{r})=c_0 \mathcal{r}^{v}$ with $0<v<1$,  since travelers switch to faster transport	modes for longer trips.}. 	Finally, the relation between the two variables can be written in terms of the attractiveness variable as:
\begin{align}\label{eq_returns}
\frac{\text{travelers' income }(I_x)}{\text{travel quantity } (Q_x)}=\frac{i_0\mathbb{E}[\kappa_i|x]}{c_0\mathbb{E}[\kappa_{q}|x]} \propto\frac{i_0\,x^{\alpha_0 + \theta_I}}{c_0\,x^{\alpha_0 + \theta_Q}}
\end{align}
where $\theta_Q$ is the scaling exponent derived from regression slope by income in Table \ref{tab_regression} and the exponent $\theta$ from regression slope by amount of travel given by distance traveled in the same analysis. The relation between three variables is represented graphically in Fig.\ref{fig_incomescalingv}. 
At this point  it is possible to write the income elasticity of travel demand  which has the meaning of how sensitive the demand for traveling a certain distance is to changes in income levels, the direct relation between the income reward and the quantity of travel  demanded for visiting locations can be derived from eq.\eqref{eq_returns} as:
\begin{equation}\label{eq_inela}
Q_x\sim I_x^{\varepsilon} \qquad \text{ , with }  \;\varepsilon= \frac{\partial Q_x}{ Q_x}\frac{I_x}{\partial I_x}=\frac{\alpha _Q}{\alpha_I}=\frac{1+\delta_Q}{1+\delta_I} \quad , \forall x \in \Omega_{x}
\end{equation} 
where the exponent $\varepsilon$ is the income elasticity as estimated in Fig.\ref{fig_incomescaling2}, and in the case under study $\varepsilon <1 $,  indicating that distance traveled is a necessity good, i.e., as income increases, people spend proportionally less on traveling when the  income levels of travelers increase for any transportation mode\footnote{For any mode and any purpose condition, undistinguished transportation mode is considered here, so all the possible modes are combined and only the distance necessary  to reach the destination is taken into account. }.  Such prediction of an income elasticity of about $\varepsilon \approx 0.65$ can be compared to results presented in literature reviews of empirical evidences and meta-analysis studies \cite{litman2021understanding, mayeres2000efficiency,goodwin2004elasticities,borjesson2012income}  and also discussed with a theoretical interpretations \cite{carra2016modelling,ghoddusi2021income}, where average income elasticity for aggregated travel demand is estimated to be in a range values  compatible with the elasticity $\varepsilon$ estimated here\footnote{Let us notice that income elasticity shows a large variability in the empirical evidences since distance traveled is not homogeneous across different sources of income, type of jobs and age\cite{le2014does}. Moreover, travel demand is reported in different units (individual or aggregate distance km/day, travel time, fuel consumption) and in different behaviors (commuting vs non-commuting, essential vs non-essential) or travel purposes (business, job, shopping, leisure). Travel can also vary in terms of different traveling modes according to transportation infrastructure. Moreover the estimations reported can even change over the years.}.  It's important to notice that the income elasticity of travel demand can be influenced by many factors, such as the availability of transportation options, the price of transportation, and individual preferences. 
\begin{figure}[!ht]
		\centering
	\begin{subfigure}[c]{0.9\textwidth}
		\centering
		\includegraphics[width=0.6\linewidth]{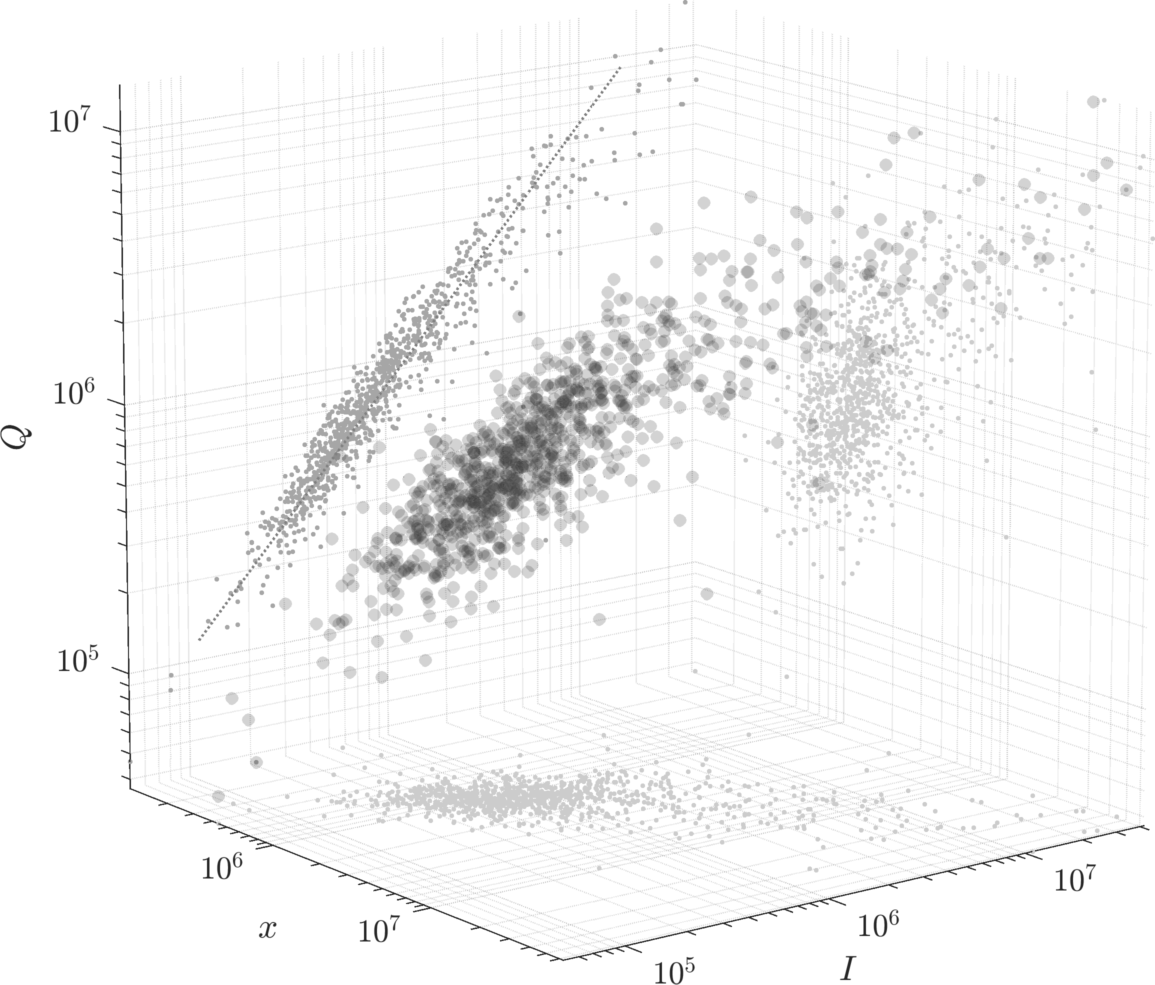}
		\caption{Three dimensional scatter plot for the variable $\{x,I_x,Q_x\}$.}
		\label{fig_incomescalingv}
	\end{subfigure}
	\\
\vspace{0.95cm}
\centering
\begin{subfigure}[c]{0.9\textwidth}
	\centering
	\includegraphics[width=0.94\linewidth]{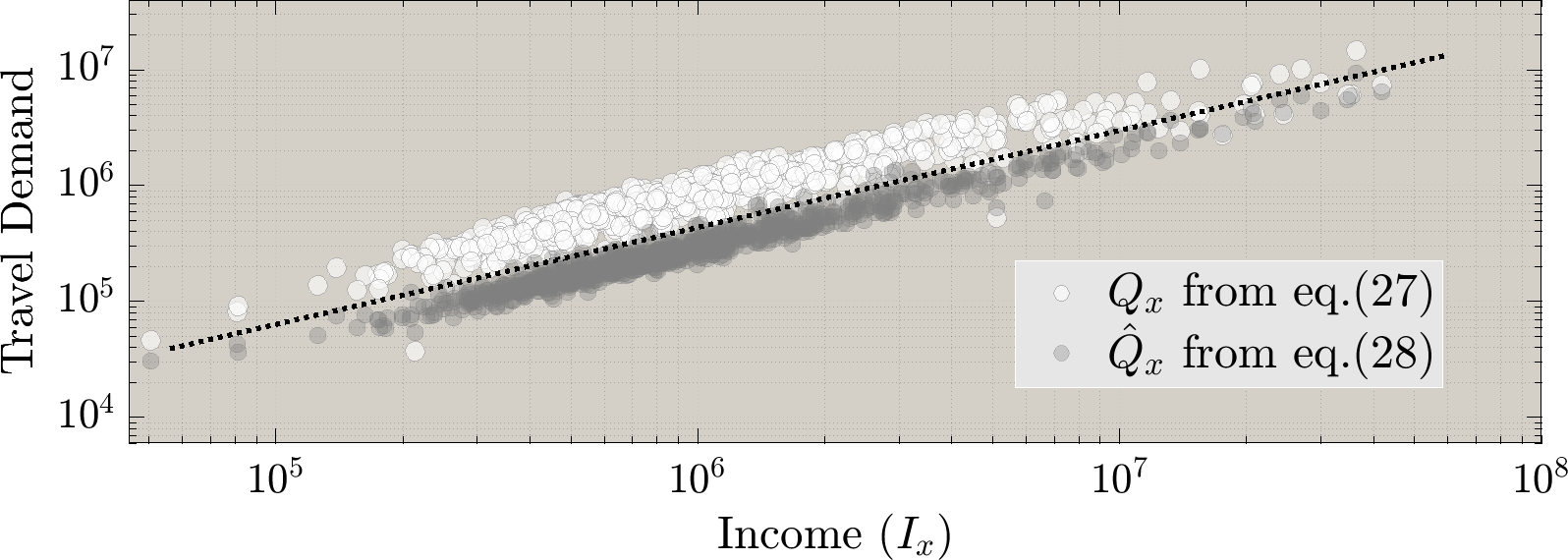}
	\caption{Scatter plot for the evaluation of the income elasticity of travel demand}
	\label{fig_incomescalingD}
\end{subfigure}
\caption{Using the dataset, for New York in 2019, in (a) a 3D scatter plot shows the relation among attractiveness $x$ as land-use, the income level of visitors and travel demand as distance traveled for each visit. The projection planes show three 2D scatter plots for pairs of the previous variables. In particular, the Q-I plane shows  the projected regression analysis   an elasticity of $\varepsilon\simeq 0.76$ consistently with the theoretical prediction from eq.\eqref{eq_inela} and from  eq.\eqref{eq_inela} where the income $I_x\sim k_x^{1+{\delta_I}/{\alpha_0}}$. This indicates that a 10 per cent increase in income leads to about 6.5 percent increase in distance traveled or, conversely, a 10 percent increase in the demand of travel distance requires an increase of the income  by about 15.4 percent.}
\label{fig_incomescaling2}
\end{figure}
In conclusion, the in eq.\eqref{eq_returns}  shed lights on an intrinsic relation  between the income elasticity of travel demand and the attractiveness scaling of urban areas. So urban planning policies, market conditions, and social dynamics emerges but also explains the interplay between attractiveness and economic through a mobility network flow which drives the travel behavior. 
From a network analysis perspective, it would be convenient to compare the  number of trip arrivals at destinations against the travelers' income. Using eq.\eqref{eq_returns} and eq.\eqref{eq_inela}, and visit elasticity of travel demand $\varepsilon'$ can be defined through:
\begin{align}
{Q}_x \sim k_x^{\varepsilon'}\qquad \text{ where }\; \varepsilon'=\left(1+\frac{\theta_I}{\alpha_0}\right)\varepsilon = (1+\delta_I)\varepsilon
\end{align}
where the income scales as $I_x\sim k_x^{1+{\theta_I}/{\alpha_0}}$ respect to the number of trip arrivals. The visit elasticity of travel demand is a measure of how sensitive the number of trips to a certain destination is to changes in key travel attributes, such as fare level, service quality, journey time components, income and car ownership, and price of competing modes.
The previous scaling relations of income elasticity, it is possible to range from the microeconomics perspective of transport economics to the macroeconomic outputs such as employment and the economic growth.  For example, during a period of an economic growth (recession) where incomes are rising (falling) the distribution and the magnitudes of attractiveness can change and then modify the mobility pattern which, in its turn, has an impact on the global economic performance as well.  

\section{Conclusions and perspectives}
In conclusion, the study presents a data-driven model for human mobility network based on an origin-destination structure, which serves as the foundation for understanding mobility visitation flows. The model utilizes latent variables associated with each location, representing attractiveness and productiveness, to capture the intrinsic characteristics of destinations and origins.
The contributions of this research are twofold. Firstly, it provides a theoretical framework that describes and reproduces a visit generation stochastic process through the use of intergral-differential equation for the evolution of the  visit probability density and degree correlations in a trip mobility network.  Consequently, analytical, numerical, and computational solutions are provided for important network characteristics, such as the strength distribution, assortativity, and clustering coefficient. A collateral impact of the visit generation model in travel dynamics is the introduction of a mathematical formalism of compound renewal processes commonly used in financial and actuarial science literature. Such stochastic models in a network perspective are useful for capturing the randomness and dependence between the arrival times and event sizes, providing a flexible framework for modeling and analyzing various mobility dynamics and financial and actuarial risks as well.  
The second contribution of this research has to do with the empirical analysis of real world phenomena. By analyzing  origin-destination data, the study reveals the presence of scale-free behaviors in visit frequencies and identifies correlations between visits and trip costs. The research also explores the statistical characteristics that latent variables should have to reproduce the observed patterns in the trip mobility network. 
Hence, the model permits to disentangle the effect of attractiveness (as land-use), population,  trip costs and economic features of travelers on the visit dynamics in a  mobility network. Clear scaling laws  emerges between latent variables and travel demand. Finally, the model points out the effect of income on travel behavior depends strictly from the latent-variables that, therefore, can be considered as decision variable from a economic policy viewpoint. The possibility that human mobility belongs to the class of scale-free networks has impacted on the economic, engineering and  mathematical  communities in the multidisciplinary field of sustainable urban transportation, smart cities and world trade webs.   
As future outlooks, from a modeling side, the mathematical formalism discussed in the paper could also be extended to more general mobility graphs by considering more granular interactions on a time interval much shorter than the one used in the data explored here. For example, one can investigate more on the distribution of inter-arrival times of new trips to be non-Poissonian, and the the trip size are events with intensity that does not have finite moments (Levy-jumps).  Another forthcoming study will be the study of emission mobility reductions for sustainable city planning. The approach can help the decision for the optimal allocation of economically attractive urban areas by  transportation mode and land-use polices, and at the same time, minimizing the emission  of pollutants and the impact of mobility trips among origin and destinations.  
From the data-side, moreover, a larger scale panel analysis is required for other regions to increase the robustness and the validity of the model.
Moreover, the study shed lights on possible applications in fields like urban,  transportation and environmental economics.   These findings offer valuable insights to policymakers and urban planners, aiding in the comprehension and prediction of mobility patterns for informed decision-making and sustainable development.


	\section{CRediT authorship contribution statement}
	\textbf{Fabio Vanni}: conceptualization, methodology, formal, numerical and statistical analysis, investigation, data curation, writing – original draft, Revision. 
	
	\section{Declaration of Competing Interest}
The author declares that he has no known competing financial interests or personal relationships that could have appeared to influence the work reported in this paper.

\section{Data availability}
Origin Destination mobility data has been retreived by SafeGraph for academic research program \cite{safegraph}, and they cannot made public available.  Social, demographic and geographical data are public accessible at \cite{Tiger,censuscommute,censustiger,pluto}.  Codes for the main network analysis are povide in the Github repository \cite{Fabiogit}.


\bibliographystyle{unsrt} 	
	\bibliography{cas-refs}
	
	
	
	
	

\renewcommand{\appendixname}{}
\appendix

\section*{APPENDIX}
In the following appendix, the visit generation process of the trip mobility network will be reframed in terms of a mixture of many compound poisson counting processes, as follows:
\begin{proposition}
	The probability distribution of the visit generation process can be restate as a mixture of compound counting poisson processes. 
	\begin{itemize}
		\item[$\circ$] Let us define the evolution of the trip arrival process looking at the generation of new visits as a counting stochastic process $\{\mathcal{C}_x(t),t\geq0\}$, defined on the probability space $(\mathbb{R},\mathcal{B}(\mathbb{R}),\mu_x)$, for all the locations of attractiveness $x$ where $\mathcal{C}_x(t)$ represents the number of visits which have arrived between time $0$ and $t$ at the destination nodes of attractiveness $x$, and  let the attraction rate $\nu_x=\mathrm{Prob}\{\mathcal{C}_x(t+\tau)=k | \mathcal{C}_x(t)=k-1 \}$ be the probability (per unit of infinitesimal time $\tau$) that a destination of attractiveness $x$  increase its visits from $k-1$ to $k$. Then, let $\mathcal{Z}_k$ be a sequence of independent, identically distributed square-integrable random variables, which represent jumps as trip-costs of the trip arrivals and they are distributed as a common random variable $\mathcal{r}_x$  with cumulative distribution function $F_{\mathcal{Z}}(\mathcal{r}_x)$. 
	\end{itemize} 
	Then, the conditional  visit probability density function $p_x(k,t)$ can be derived by a continuous-time stochastic process with jumps. In particular:
	\begin{itemize} 
		\item[1] 	The compound Poisson process $\{Y_x(t):t>0\}$ is  continuous-time stochastic process, adapted to a filtration $\mathcal{F}_t$, with random jumps of intensity $\nu_x$,  defined by:
		\begin{align}
		Y_x(t)=\sum_{k=1}^{\mathcal{C}_x(t)}\mathcal{Z}_k
		\end{align}
		Consequently, it turns out that the probability density function  can be explicitly written for a compound
		Poisson process as $p(\kappa,t|x)=f_{Y_x(t)}(\kappa,t|x)$ and the cumulative distribution function is given by:
		\begin{align}
		F_{Y_x(t)}(\kappa,t|x)=e^{-\nu_x t}\sum_{n=0}^{\infty}\frac{(\nu_{x} t)^n}{n!}F^{*n}_{{Y}_{n,x}}(\kappa|x)
		\end{align}
		where $F^{*n}_{{Y}_{n,x}}$ is the n-fold convolution of $F_{\mathcal{Z}}(\mathcal{r})$.
		Moreover, the characteristic function and the first two central moments  of $Y_x(t)$ are:
		\begin{align*}
		\mathbb{E}[e^{isY_x(t)}]&=\exp \{\nu_{x}t( \mathbb{E}[e^{is\mathcal{Z}}-1] )\} & \\
		\mathbb{E}[Y_x(t)]&=	\mathbb{E}[\mathcal{C}_x(t)]\,	\mathbb{E}[\mathcal{Z}]=\mathbb{E}_t[\kappa|x]=\nu_{x}\langle  \mathcal{r} \rangle_x t\\
		\mathbb{V}ar[Y_x(t)]&=\mathbb{V}ar[\mathcal{C}_x(t)]\,	\mathbb{E}[\mathcal{Z}^2] = \nu_{x}\langle  \mathcal{r}^2 \rangle_x t
		\end{align*}
		
		\item[2] 	The visit distribution is the mixture of stochastic poisson distributions $p_x(\kappa,t)$ \cite{willmot1986mixed} i.e. the conditional visit probability density function where the intensity of the (simple) counting process, $\nu_x$, is a random variable with a probability density function $\rho(x)$.
		The number of locations with visit strength $\kappa$, which we denote by $N_{\kappa}$ satisfies, see  \cite{bollobas2007phase,van2013critical}
		\begin{equation}
		N_{\kappa}/n \xrightarrow{\mathbb{P}} f_{\kappa}\equiv \mathbb{E}_x\left[ F'_{Y_x(t)}(\kappa,t|x) \right]
		\end{equation}
		where $ \xrightarrow{\mathbb{P}}$ denotes convergence in probability. We recognize the limiting distribution $\{f_k \}$  as a so-called mixed Poisson distribution with mixing distribution $\mu_x(x)$ .
	\end{itemize}
\end{proposition}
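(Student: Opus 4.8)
The plan is to prove the two items in turn, in each case reducing the claim to standard facts about compound Poisson processes together with the law of large numbers for latent-variable random graphs that already underlies Assumption~A.\ref{a_1}. No new machinery is needed beyond Assumptions A.\ref{a_1}--A.\ref{a_3} and Proposition~\ref{prop1}.

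For item~1, I would first argue that the counting process $\{\mathcal{C}_x(t)\}$ is an ordinary homogeneous Poisson process of rate $\nu_x$. This is immediate from Assumption~A.\ref{a_2}: the infinitesimal probability of a new arrival at a destination of type $x$ during an increment $\tau$ is $\nu_x\tau+o(\tau)$, it depends neither on $t$ nor on the current strength $\kappa$ (there is no preferential feedback in the kernel), and disjoint increments are independent since edges are added with replacement; hence $\mathcal{C}_x(t)\sim\mathrm{Poisson}(\nu_x t)$. Attaching the i.i.d., square-integrable jumps $\mathcal{Z}_k\sim F_{\mathcal{Z}}$ of Assumption~A.\ref{a_3}, independent of the clock, the in-strength at time $t$ is by construction $Y_x(t)=\sum_{k=1}^{\mathcal{C}_x(t)}\mathcal{Z}_k$, a compound Poisson process. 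The cumulative distribution function then follows by conditioning on $\{\mathcal{C}_x(t)=n\}$, using that given $n$ arrivals $Y_x(t)$ is a sum of $n$ i.i.d.\ copies of $\mathcal{Z}$ with law $F^{*n}_{\mathcal{Z}}$, and summing against the Poisson weights $e^{-\nu_x t}(\nu_x t)^n/n!$. The characteristic function comes from the Poisson probability generating function, $\mathbb{E}[e^{isY_x(t)}]=\mathbb{E}\big[(\mathbb{E}[e^{is\mathcal{Z}}])^{\mathcal{C}_x(t)}\big]=\exp\{\nu_x t(\mathbb{E}[e^{is\mathcal{Z}}]-1)\}$; differentiating at $s=0$, or equivalently invoking Wald's identity and the law of total variance, gives $\mathbb{E}[Y_x(t)]=\mathbb{E}[\mathcal{C}_x(t)]\mathbb{E}[\mathcal{Z}]=\nu_x\langle\mathcal{r}\rangle_x t$ and, using the Poisson identity $\mathbb{E}[\mathcal{C}_x(t)]=\mathbb{V}ar[\mathcal{C}_x(t)]$, $\mathbb{V}ar[Y_x(t)]=\mathbb{E}[\mathcal{C}_x(t)]\mathbb{V}ar[\mathcal{Z}]+\mathbb{V}ar[\mathcal{C}_x(t)]\mathbb{E}[\mathcal{Z}]^2=\mathbb{V}ar[\mathcal{C}_x(t)]\mathbb{E}[\mathcal{Z}^2]=\nu_x\langle\mathcal{r}^2\rangle_x t$. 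Finally, to close the loop with Proposition~\ref{prop1}, I would note that the generator of a compound Poisson process with jump rate $\nu_x$ and jump density $\varrho_x$ produces exactly the forward Kolmogorov--Feller equation~\eqref{eq_mastereq} for the one-dimensional density $p_x(\kappa,t)$, with $Y_x(0)=0$ corresponding to $p_x(\kappa,0)=\delta(\kappa)$; by uniqueness of the solution of that initial-value problem, $p(\kappa,t\mid x)=f_{Y_x(t)}(\kappa,t\mid x)$, which is the asserted identification.

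For item~2, the structural inputs are that the attractiveness classes evolve independently (the continuous mean-field hypothesis) and that the latent labels $x_1,\dots,x_n$ are i.i.d.\ with law $\mu_x$ (Assumption~A.\ref{a_1}), so that each node $i$ carries strength $\kappa_i=Y_{x_i}(t)$ with $\mathbb{P}(\kappa_i\le\kappa\mid x_i)=F_{Y_{x_i}(t)}(\kappa,t\mid x_i)$. Conditioning on the label sequence, $N_\kappa/n$ is an average of independent occupation terms with conditional mean $F'_{Y_{x_i}(t)}(\kappa,t\mid x_i)$; a two-level law of large numbers --- first averaging the conditionally independent occupation indicators given the labels, then using i.i.d.\ convergence of the empirical label distribution to $\mu_x$ --- yields $N_\kappa/n\xrightarrow{\mathbb{P}}\mathbb{E}_x[F'_{Y_x(t)}(\kappa,t\mid x)]$, which is exactly the convergence for inhomogeneous random graphs of \cite{bollobas2007phase,van2013critical}. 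By construction the limit is the mixture over $x\sim\mu_x$ of the compound Poisson laws of item~1, i.e.\ a mixed (compound) Poisson distribution with mixing measure $\mu_x$, as claimed.

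I expect the only genuinely delicate step to be this last convergence: one must be careful whether strengths are counted exactly (discrete jumps) or through their density (continuous jumps, as the notation $F'_{Y_x(t)}$ suggests), and the uniform concentration needed to carry the per-node occupation averages through the randomness of the labels is precisely the content of the random-graph references invoked --- so I would cite it rather than reprove it. Everything else --- the Poisson clock, the compounded sum, the convolution formula for the distribution, the characteristic function and the two moments, and the identification with the Kolmogorov--Feller dynamics --- is routine bookkeeping on top of standard compound Poisson facts.
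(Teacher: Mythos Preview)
Your proposal is correct and follows the same route as the paper, which in fact does not give a self-contained proof at all: the paper simply cites standard references on compound Poisson/L\'evy processes (Privault, Scalas, Tankov) and notes that the compound Poisson process is a L\'evy process with the Markov property. Your write-up spells out explicitly what those references contain --- the Poisson clock from A.\ref{a_2}, the conditioning-on-$\{\mathcal{C}_x(t)=n\}$ derivation of the convolution CDF, the PGF computation of the characteristic function, Wald/total-variance for the moments, the identification with the Kolmogorov--Feller equation \eqref{eq_mastereq}, and the two-level LLN for item~2 --- so you are more detailed than the paper, but not on a different path.
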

\begin{proof}
The proof is recovered from \cite{privault2022introduction,scalas2012class,tankov2003financial}. In particular, a compound Poisson counting process is equivalent to a Levy process and its realizations are piecewise constant cadlag functions.  As a consequence of the above results, the compound Poisson process enjoys 	 all the properties of Levy processes, including the Markov property. Compound Poisson processes are commonly used in insurance to model the number of claims and the size of each claim. Insurance companies can use this model to estimate their expected losses and set their premiums accordingly.
\end{proof}
	Moreover, in finance, Compound Poisson processes can be used to model stock prices, where the arrival times of price changes follow a Poisson process and the size of each change is a random variable.  Finally, in risk management, Compound Poisson processes can be used to model risk in various industries, such as energy trading, where the arrival times of price changes and the size of each change are both uncertain.
	
\end{document}
\endinput